\begin{document}

\title{On the Comparison between the Reliability of Units Produced by Different Production Lines}

\author[1]{Rashad M. EL-Sagheer}

\author[1]{M. A. W. Mahmoud}

\author[2]{M. M. M. Mansour}

\author[2]{Mohamed S. Aboshady*}

\authormark{RASHAD EL-SAGHEER \textsc{et al}}

\address[1]{\orgdiv{Department of Mathematics}, \orgname{Faculty of Science, Al-Azhar University}, \orgaddress{\state{Cairo}, \country{Egypt}}}

\address[2]{\orgdiv{Department of Basic Science}, \orgname{Faculty of Engineering, The British University in Egypt}, \orgaddress{\state{Cairo}, \country{Egypt}}}

\corres{*Mohamed Saied Aboshady\\ \email{Mohamed.Aboshady@bue.edu.eg}}


\abstract[Abstract]{The paper discusses how to evaluate the reliability of units produced by different production lines. The procedure is based on selecting independent random samples of units produced by different production lines and then evaluating reliability functions for each group of units. The comparison between these reliability functions at a given time allows manufacturing experts to evaluate the effectiveness of production lines. A statistical methodology has been taken based on the assumption that the lifetime of units produced by each product line has a Weibull Gamma distribution. Then, real-world data is used to illustrate the study's contribution to reliability theory applications.}

\keywords{Joint progressive Type-II censoring, Bias Corrected Confidence Interval (Boot-BC), Accelerated Bias Corrected Confidence Interval (Boot-BCa), Comparing the reliability between $k$ production lines, MCMC technique}


\maketitle


\section{Introduction}\label{sec1}

The progressive Type-II censoring acquires its importance from giving the experimenter the flexibility to remove some units from the test at non-terminal time points. This feature leads to reduce the cost of carrying out such experiments and accelerates the lifetime of the test. Life testing aims to obtain information about the population parameters through testing some units, where knowing this information or data help statisticians to estimate, for instance, the reliability function after fitting these data to the appropriate population. Several authors have studied the estimation problems based on progressive Type-II censored samples, for example, Wu and Chang \cite{1}, Abdel-Hamid and AL-Hussaini \cite{2}, Ahmed \cite{3}, El-Sagheer and Ahsanullah \cite{4} and Abdel-Hamid \cite{5}. The joint censoring scheme has been previously developed in the literature. For example, Balakrishnan and Rasouli \cite{6} discussed exact inference for two exponential populations when Type-II censoring is implemented on the two samples in a combined manner. Balakrishnan and Feng \cite{7} generalized their work to the case of $k$ samples situation. Recently, statisticians turn their attention to a new development of progressive Type-II censoring which is called joint progressive Type-II censoring (JPROG-II-C). Balakrishnan and Rasouli \cite{8} subsequently extended their work to the case of two exponential populations when JPROG-II-C is implemented on the two samples. Doostparast et al. \cite{9} estimated the parameters of two Weibull populations based on JPROG-II-C samples using the Bayesian estimation under LINEX loss function. Balakrishnan et al. \cite{10} generalized the work of Balakrishnan and Rasouli \cite{8} to $k$ exponential populations. The JPROG-II-C samples can be observed mostly in experimental situations. Assume that units are being manufactured by $k$ different lines within the same facility. From each production line $h$, a random sample of size $n_{h},~h=1,2,...,k,$ is selected where the samples are independent. Now, these independent samples are simultaneously put to life testing. The test may be expensive or time-consuming, so the experimenter will proceed to apply the progressive Type-II strategy. Accordingly, the experimenter will terminate the experiment when the $r$th failure occurs. In this paper, $k$ independent samples from WG populations are put under life testing based on JPROG-II-C. The remainder of this paper is organized as follows: Section 2 provides a description of WG distribution (WGD) and the test steps. In Section 3, the maximum likelihood estimates (MLEs) of the parameters under consideration are obtained in addition to the corresponding ACIs and four types of bootstrap confidence intervals are also computed. The comparison between the reliability of $k$ production lines is also developed in Section 3. Section 4 is devoted to the Bayesian approach that uses the famed MCMC technique. An illustrative example is presented to explain the theoretical results at $k=2$ in Section 5. In addition to computing the expected values of the number of failures before the test procedure, the exact values of the number of failures are also observed after the test procedure. A simulation study is performed to compare the expected values with the exact values in Section 6 at $k=2$ in addition to applying on a real data example. Finally, the conclusion is placed in Section 7.

\section{Model description}\label{sec2}

This section contains two subsections. The first is devoted to giving a brief description of WGD. The likelihood function under $k$ populations is presented based on JPROG-II-C in the second subsection.

\subsection{Weibull Gamma Distribution}

The WG is a suitable distribution for the phenomenon of loss of signals in telecommunications, which is called fading, when multi-path is superimposed on shadowing, see Bithas \cite{11}. The WGD had been introduced by Nadarajah and kotz \cite{12}. A random variable $X$ is said to have WGD with scale parameter $\alpha$ if its probability density function (PDF) is given by:
\begin{equation}
f\left( x;\alpha ,\theta ,\beta \right) =\frac{\theta \beta }{\alpha }\left(\frac{x}{\alpha }\right) ^{\theta -1}\left( 1+\left( \frac{x}{\alpha}\right) ^{\theta }\right) ^{-\left( \beta +1\right) }\;,x>0;~\alpha ,\theta,\beta >0,  \label{2.1.1}
\end{equation}
and the reliability function at any time $t$ is 

\begin{equation}
R(t)=P(X>t)=\left( 1+\left( \frac{t}{\alpha }\right) ^{\theta }\right)^{-\beta },t>0.  \label{2.1.2}
\end{equation}

For more details about WGD and its properties see Molenberghs and Verbeke \cite{13} and Mahmoud et al. \cite{14}. The estimation of WGD parameters is discussed by EL-Sagheer \cite{15} based on progressive Type-II censored samples.

\subsection{Implementation of JPROG-II-C}

Suppose that $X_{h1},X_{h2},...,X_{hn_{h}}$ are the lifetimes of $n_{h}$ units from line $A_{h}$, and assume that they are independent and identically distributed (iid) variables from a population with cumulative distribution function (CDF) $F_{h}(x)$ and PDF $f_{h}(x),h=1,2,...,k$. Let $N=\sum_{h=1}^{k}n_{h}$ denotes the total sample size, $r$ denotes the total number of the observed failures and $\zeta _{1}\leq \zeta _{2}\leq ...\leq \zeta_{N}$ represent the order statistics of the $N$ random variables $\{X_{h_{i}1},X_{h_{i}2},...,X_{h_{i}n_{i}}\}.$ When JPROG-II-C is implemented on $k$ samples, the observable data will consist of $\left(\delta,\zeta\right)$, where $\zeta =(\zeta _{1}\leq\zeta _{2}\leq ...\leq \zeta _{r}),~\zeta _{i}\in \{X_{h_{i}1},X_{h_{i}2},...,X_{h_{i}r}\}$ for, $h_{i}=1,2,...,k,$ $i=1,2,...,r$, and depending on $\mathbf{(}h_{1}$ $\mathbf{\leq }$ $h_{2}$ $\mathbf{\leq ...\leq }$ $h_{r}\mathbf{)}$, $\delta$ can be defined as: $\delta =\left( \delta _{1}(h),\delta_{2}(h),...,\delta _{r}(h)\right) ,$ where 
\begin{equation}
\delta _{i}(h)=\left\{ 
\begin{array}{l}
1, \\ 
0,%
\end{array}%
\begin{array}{l}
\text{if }h=h_{i} \\ 
\multicolumn{1}{c}{\text{elsewhere,}}%
\end{array}%
\right.
\end{equation}
with $\sum_{h=1}^{k}\delta _{i}(h)=1$, for $i=1,2,...,r.$ After the first failure occurrence at time $\zeta _{1},$ $R_{1}$ units are randomly removed from the remaining $N-1$ surviving units. At time $\zeta _{2}$, the second failure happened and $R_{2}$ units are randomly removed from the remaining $N-R_{1}-2$ surviving units. Finally, at time $\zeta _{r}$, $R_{r}=N-r-\sum_{i=1}^{r-1}R_{i}$ surviving units are removed from the test when the $r$th failure happened. A detailed plan for PROG-II-C is discussed by Balakrishnan and Aggarwala \cite{16}. Let $s_{i}(h)$ represent the number of the removed units at the occurrence time of the $i$th failure from sample $h$. Hence $R_{i}=\sum_{h=1}^{k}s_{i}(h),$ which is previously determined by the experimenter, for $i=1,2,...,r$. According to the JPROG-II-C, it is possible to assume $r=\sum_{h=1}^{k}M_{r}(h)$ where $M_{r}(h)=\sum_{i=1}^{r}\delta_{i}(h)$ represents the number of failures occurred in sample $h$.

\section{Maximum Likelihood Estimation}\label{sec3}

The log-likelihood functions are the basis for deriving estimators of parameters, given data. MLEs are characterized by different advantages such as asymptotically normally distributed, asymptotically minimum variance, asymptotically unbiased, and satisfy the invariant property, see Azzalini \cite{17} and Royall \cite{18} for more information on likelihood theory. Balakrishnan et al. \cite{10} introduced the likelihood function in a general form for $k$ samples drawn from any PDF under JPROG-II-C as:
\begin{equation}
L(\alpha _{1,2,...,k},\theta _{1,2,...,k},\beta _{1,2,...,k},\delta,s,\zeta)=c_{r}\prod_{i=1}^{r}\prod_{h=1}^{k}\left(f_{h}\left( \zeta _{i}\right) \right) ^{\delta_{i}(h)}\prod_{i=1}^{r}\prod_{h=1}^{k}\left( \bar{F}_{h}\left( \zeta_{i}\right) \right) ^{s_{i}(h)},  \label{3.1}
\end{equation}
where $\mathbf{s=}\left\{ (s_{1}\left( h\right) ,s_{2}\left( h\right),...,s_{r}\left( h\right) )\right\} _{h=1}^{k},$ $\bar{F}_{h}\left( \zeta_{i}\right) =1-F_{h}\left( \zeta _{i}\right) $ and $c_{r}=D_{1}D_{2}$, with
\begin{equation*}
D_{1}=\prod_{i=1}^{r}\left[ \sum_{h=1}^{k}\left( n_{h}-M_{i-1}\left(
h\right) -\sum_{j=1}^{i-1}s_{j}(h)\right) \delta _{i}(h)\right] ,
\end{equation*}
and 
\begin{equation*}
D_{2}=\prod_{i=1}^{r-1}\left\{ \frac{\prod_{h=1}^{k}\binom{n_{h}-M_{i}\left(
h\right) -\sum_{j=1}^{i-1}s_{j}(h)}{s_{j}(h)}}{\binom{N-i-%
\sum_{j=1}^{i-1}R_{j}}{R_{i}}}\right\}.
\end{equation*}
If the removal of units occur only at the terminal point, then we will obtain the joint Type-II censoring scheme as a special case of JPROG-II-C, i.e. $s_{i}(h)=0$, therefore $R_{i}=0$ when $i=1,2,...,r-1.$ and $s_{r}(h)=n_{h}-M_{r}(h),$ for all $h=1,2,...,k,$ hence $R_{r}=\sum_{h=1}^{k}s_{r}(h)=N-r.$ For $\bar{F}_{h}\left( \zeta _{i}\right)=\left( 1+\left( \frac{\zeta _{i}}{\alpha _{h}}\right) ^{\theta _{h}}\right)^{-\beta _{h}}$ and $f_{h}\left( \zeta _{i}\right) =\frac{\theta _{h}\beta_{h}}{\alpha _{h}}\left( \frac{\zeta _{i}}{\alpha _{h}}\right) ^{\theta_{h}-1}\left( 1+\left( \frac{\zeta _{i}}{\alpha _{h}}\right) ^{\theta_{h}}\right) ^{-\left( \beta _{h}+1\right) },~i=1,2,...,r,~h=1,2,...,k$, representing the survival function and PDF for the test units' respectively, equation \eqref{3.1} can be written as follows:
\begin{eqnarray}
L(\alpha _{1,2,...,k},\theta _{1,2,...,k},\beta _{1,2,...,k},\delta,s,\zeta)=c_{r}\prod_{h=1}^{k}\left( \frac{\theta_{h}\beta _{h}}{\alpha _{h}}\right) ^{M_{r}(h)} \times \exp \left\{ \left( \theta _{h}-1\right) \left[ \sum_{i=1}^{r}\delta _{i}(h)\log \zeta _{i}-M_{r}(h)\log \alpha _{h}\right] \right\} \nonumber \\ \times \exp \left\{ -\sum_{i=1}^{r}\left[ \left( \beta _{h}\left( \delta_{i}(h)+s_{i}(h)\right) +\delta _{i}(h)\right) \log \left( 1+\left( \frac{\zeta _{i}}{\alpha _{h}}\right) ^{\theta _{h}}\right) \right] \right\} .
\label{3.2}
\end{eqnarray}

The log-likelihood function may then be written as:
\begin{eqnarray*}
\log L=\log c_{r}+\sum_{h=1}^{k}M_{r}(h)\left( \log \theta _{h}+\log \beta_{h}-\theta _{h}\log \alpha _{h}\right) \\ + \sum_{h=1}^{k}\sum_{i=1}^{r}\left[ \left( \theta _{h}-1\right) \delta_{i}(h)\log \zeta _{i}-\left( \beta _{h}\left( \delta_{i}(h)+s_{i}(h)\right) +\delta _{i}(h)\right) \log \left( 1+\left( \frac{\zeta _{i}}{\alpha _{h}}\right) ^{\theta _{h}}\right) \right] ,
\end{eqnarray*}

and thus we have the likelihood equations for $\alpha _{h},\theta _{h}$ and $\beta _{h},~h=1,2,...,k$, respectively, as
\begin{equation}
\frac{M_{r}(h)\theta _{h}}{\alpha _{h}}-\frac{\theta _{h}}{\alpha_{h}^{\theta _{h}+1}}\sum_{i=1}^{r}\frac{\zeta _{i}^{\theta _{h}}\left(\beta _{h}\left( \delta _{i}(h)+s_{i}(h)\right) +\delta _{i}(h)\right) }{1+\left( \frac{\zeta _{i}}{\alpha _{h}}\right) ^{\theta _{h}}}=0,
\label{3.3}
\end{equation}

\begin{equation}
\left( 1-\theta _{h}\log \alpha _{h}\right) \frac{M_{r}(h)}{\theta _{h}}+\sum_{i=1}^{r}\left[ \delta _{i}(h)\log \zeta _{i}-\frac{1}{\alpha_{h}^{\theta _{h}}}\left( \frac{\zeta _{i}^{\theta _{h}}\left( \beta_{h}\left( \delta _{i}(h)+s_{i}(h)\right) +\delta _{i}(h)\right) \log \left( \frac{\zeta _{i}}{\alpha _{h}}\right) }{1+\left( \frac{\zeta _{i}}{\alpha_{h}}\right) ^{\theta _{h}}}\right) \right] =0,  \label{3.4}
\end{equation}
and
\begin{equation}
\frac{M_{r}(h)}{\beta _{h}}-\sum_{i=1}^{r}\left( \delta
_{i}(h)+s_{i}(h)\right) \log \left( 1+\left( \frac{\zeta _{i}}{\alpha _{h}}%
\right) ^{\theta _{h}}\right) =0.  \label{3.5}
\end{equation}

Since the above Equations \eqref{3.3}, \eqref{3.4} and \eqref{3.5} are nonlinear simultaneous equations in $3k$ unknown variables $\alpha _{h}$, $\theta _{h}$ and $\beta _{h}$ for $h=1,2,...,k$, it is obvious that an exact solution is not easy to obtain. Therefore, a numerical method such as Newton Raphson can be used to find approximate solution.The algorithm has been implemented using the following steps:

\begin{enumerate}
\item[(1)] Use the method of moments or some other proper estimates of the parameters as initial points for iteration, denote the initials as $(\alpha_{h}\left( 0\right) ,\theta _{h}(0),\beta _{h}(0))$ for the parameters $(\alpha _{h},\theta _{h},\beta _{h})$ and set $j=0.$

\item[(2)] Calculate $\left( \frac{\partial \log L}{\partial \alpha _{h}},%
\frac{\partial \log L}{\partial \theta _{h}},\frac{\partial \log L}{\partial
\beta _{h}}\right) _{\left( \alpha _{h}\left( j\right) ,\theta _{h}\left(
j\right) ,\beta _{h}(j)\right) }$ and the observed asymptotic Fisher
information matrix $I$ $^{-1}(\alpha _{h},\theta _{h},\beta _{h})$, given in
the next section.

\item[(3)] Update $(\alpha _{h},\theta _{h},\beta _{h})$ as
\begin{equation*}
\left( \alpha _{h}(j+1),\theta _{h}\left( j+1\right) ,\beta _{h}(j+1)\right)=\left( \alpha _{h}\left( j\right) ,\theta _{h}\left( j\right) ,\beta_{h}(j)\right)-\left( \frac{\partial \log L}{\partial \alpha _{h}},\frac{\partial \log L}{\partial \theta _{h}},\frac{\partial \log L}{\partial \beta_{h}}\right) _{\left( \alpha _{h}\left( j\right) ,\theta _{h}\left( j\right),\beta _{h}(j)\right) } \times I^{-1}(\alpha _{h},\theta _{h},\beta _{h}).
\end{equation*}

\item[(4)] Put $j=j+1$, and then return to step $2.$

\item[(5)] Continue the consecutive steps until
\begin{equation*}
\left\vert \left( \alpha _{h}(j+1),\theta _{h}\left( j+1\right) ,\beta
_{h}(j+1)\right) -\left( \alpha _{h}\left( j\right) ,\theta _{h}\left(
j\right) ,\beta _{h}(j)\right) \right\vert \leq \epsilon \rightarrow 0.%
\newline
\end{equation*}
The final estimates of $(\alpha _{h},\theta _{h},\beta _{h})$ are the MLEs
of the parameters, denoted as$\left( \hat{\alpha}_{h},\hat{\theta}_{h},\hat{%
\beta}_{h}\right) $ for $h=1,2,...,k.$
\end{enumerate}

In the next two subsections, ACIs and four different bootstrap confidence intervals for the parameters $(\alpha _{h},\theta _{h},\beta _{h})$ are deduced based on the MLEs.

\subsection{Approximate confidence intervals}

The $\left( 1-\vartheta \right) 100\%$ \ ACIs for the parameters $\alpha
_{h},\theta _{h}$ and $\beta _{h}$ $,h=1,2,...,k$, can be written as: 
\begin{equation*}
\begin{array}{c}
(\hat{\alpha}_{hL},\hat{\alpha}_{hU})=\hat{\alpha}_{h}\pm z_{1-\frac{\zeta }{%
2}}\sqrt{var(\hat{\alpha}_{h})} \\ 
(\hat{\theta}_{hL},\hat{\theta}_{hU})=\hat{\theta}_{h}\pm z_{1-\frac{\zeta }{%
2}}\sqrt{var(\hat{\theta}_{h})} \\ 
(\hat{\beta}_{hL},\hat{\beta}_{hU})=\hat{\beta}_{h}\pm z_{1-\frac{\zeta }{2}}%
\sqrt{var(\hat{\beta}_{h})}%
\end{array},
\end{equation*}
where $z_{1-\frac{\vartheta }{2}}$is the percentile of the standard normal distribution with left-tail probability $1-\frac{\vartheta }{2}$, and $var(\hat{\alpha}_{h}),var(\hat{\theta}_{h})$ and $var(\hat{\beta}_{h})$ represent the asymptotic variances of the maximum likelihood estimates which can be calculated using the inverse of the Fisher information matrix. Let $I(\Omega _{1_{h}},\Omega _{2_{h}},\Omega _{3_{h}})$ $=$ $I(\Omega_{1_{1}},...,\Omega _{1_{k}},\Omega _{2_{1}},...,\Omega _{2_{k}},\Omega_{3_{1}},...,\Omega _{3k})$ denotes the asymptotic Fisher information matrix of the parameters $\Omega _{1_{h}}=\alpha _{h},\;\Omega _{2_{h}}=\theta_{h}\;$and $\Omega _{3_{h}}=\beta _{h},h=1,2,...,k,$
where 
\begin{equation*}
I(\Omega _{1_{h}},\Omega _{2_{h}},\Omega _{3_{h}})=-\left( \frac{\partial
^{2}\log L}{\partial \Omega _{i_{h}}\partial \Omega _{j_{h}}}\right)
,i,j=1,2,3\text{, for each }h.
\end{equation*}

The asymptotic variance--covariance matrix for the maximum likelihood estimates can then be calculated as follows:
\begin{equation}
I^{-1}=\left[ -\left( \frac{\partial ^{2}\log L}{\partial \Omega
_{i_{h}}\partial \Omega _{j_{h}}}\right) \right] _{\downarrow \left( \hat{%
\Omega}_{1_{h}},\hat{\Omega}_{2_{h}},\hat{\Omega}_{3_{h}}\right) }^{-1},
\label{3.7}
\end{equation}%
for more details see Cohen \cite{19}.

\subsection{Bootstrap confidence intervals}

The Bootstrap confidence intervals are proposed based on the parametric bootstrap methods where the parametric model for the data is known, $f\left(\zeta;.\right)$, up to the unknown parameters $\left( \alpha _{h},\theta _{h},\beta _{h}\right),$ such that the bootstrap data are sampled from $f\left(\zeta;\hat{\alpha}_{h},\hat{\theta}_{h},\hat{\beta}_{h}\right)$, where $\left(\hat{\alpha}_{h},\hat{\theta}_{h},\hat{\beta}_{h}\right) $ are the MLEs from the original data. A lot of papers dealt only with percentile bootstrap method (Boot-p) based on the idea of Efron \cite{20} and bootstrap-t method (Boot-t) based on the idea of Hall \cite{21}, such as Soliman et al. \cite{22}, El-Sagheer \cite{23} and others. In this article, two additional types of Bootstrap CIs, Boot-BC and Boot-BCa, based on the idea of DiCiccio and Efron \cite{24}, are discussed. The following algorithm is followed to obtain bootstrap samples for the four methods:

\begin{enumerate}
\item[(1)] Based on the original JPROG-II-C sample, $\zeta _{1}$ $\mathbf{%
\leq }$ $\zeta _{2}\mathbf{\leq ...\leq }$ $\zeta _{r},$ compute $\hat{\alpha}%
_{h},\hat{\theta}_{h}$ and $\hat{\beta}_{h}$ for $h=1,2,...,k.$

\item[(2)] Use $\hat{\alpha}_{h},\hat{\theta}_{h}$ and $\hat{\beta}_{h}$ to generate a bootstrap sample $\zeta^{\ast }$ with the same
values of $R_{i},\;i=1,2,...,r\;$using algorithm presented in Balakrishnan
and Sandhu \cite{25}.

\item[(3)] As in step (1) and based on $\zeta^{\ast }$ compute the
bootstrap sample estimates for $\hat{\alpha}_{h},\hat{\theta}_{h}$ and $\hat{%
\beta}_{h}$, say $\hat{\alpha}_{h}^{\ast },\hat{\theta}_{h}^{\ast }$ and $\hat{%
\beta}_{h}^{\ast }.$

\item[(4)] Repeat the previous steps (2) and (3) $B$ times and arrange all $%
\hat{\alpha}_{h}^{\ast },\hat{\theta}_{h}^{\ast }$ and $\hat{\beta}%
_{h}^{\ast }\;$in ascending order to obtain the bootstrap sample $\left( 
\hat{\Omega}_{j_{h}}^{\ast \lbrack 1]},\hat{\Omega}_{j_{h}}^{\ast \lbrack
2]},...,\hat{\Omega}_{j_{h}}^{\ast \lbrack B]}\right) ,\;j=1,2,3,$ where $%
\hat{\Omega}_{1_{h}}^{\ast }=\hat{\alpha}_{h}^{\ast },\;\hat{\Omega}%
_{2_{h}}^{\ast }=\hat{\theta}_{h}^{\ast },\;\;\hat{\Omega}_{3_{h}}^{\ast }=%
\hat{\beta}_{h}^{\ast }.$
\end{enumerate}

\subsubsection{Bootstrap-p confidence interval}

Let $\Phi (z)=P(\hat{\Omega}_{j_{h}}^{\ast }\leq z)\;$be the cumulative
distribution function of $\hat{\Omega}_{j_{h}}^{\ast }.\;$Define $\hat{\Omega%
}_{j_{h}Boot}^{\ast }=\Phi ^{-1}(z)\;$for a given $z$. The approximate
bootstrap-p $100(1-\vartheta )\%\;$confidence interval of $\hat{\Omega}%
_{k}^{\ast }\;$is given by:
\begin{equation*}
\;%
\begin{bmatrix}
\;\hat{\Omega}_{j_{h}Boot}^{\ast }(\frac{\vartheta }{2}) & , & \hat{\Omega}%
_{j_{h}Boot}^{\ast }(1-\frac{\vartheta }{2})\;%
\end{bmatrix}%
.\;
\end{equation*}

\subsubsection{Bootstrap-t confidence interval}

Consider the order statistics $\mu _{j_{h}}^{\ast \lbrack 1]}<\;\mu
_{j_{h}}^{\ast \lbrack 2]}<\;...\;<\;\mu _{j_{h}}^{\ast \lbrack B]}\;$where%
\begin{equation*}
\mu _{j_{h}}^{\ast \lbrack p]}=\frac{\sqrt{B}(\hat{\Omega}_{j_{h}}^{\ast
\lbrack p]}-\hat{\Omega}_{j_{h}})}{\sqrt{Var\left( \hat{\Omega}%
_{j_{h}}^{\ast \lbrack p]}\right) }},\;p=1,2,...,B;\;j=1,2,3;\text{for }%
h=1,2,...,k.
\end{equation*}%
where $\hat{\Omega}_{1_{h}}=\hat{\alpha}_{h},\;\hat{\Omega}_{2_{h}}=\hat{%
\theta}_{h}\;$and $\hat{\Omega}_{3_{h}}=\hat{\beta}_{h}\;$while $Var\left( 
\hat{\Omega}_{j_{h}}^{\ast \lbrack p]}\right) \;$is obtained using the
inverse of the Fisher information matrix as done before in \eqref{3.7}. Let $%
W\left( z\right) =P\left( \mu _{j_{h}}^{\ast }<z\right) ,~j=1,2,3$ be the
cumulative distribution function of $\mu _{j_{h}}^{\ast }.$\newline
For a given $z,$ define%
\begin{equation*}
\hat{\Omega}_{j_{h}Boot\;-t}^{\ast }=\hat{\Omega}_{j_{h}}+B^{\frac{-1}{2}}%
\sqrt{Var\left( \hat{\Omega}_{j_{h}}^{\ast }\right) }W^{-1}\left( z\right) .
\end{equation*}%
Thus, the approximate bootstrap-t $100(1-\vartheta )\%\;$confidence interval
of $\hat{\Omega}_{j_{h}}^{\ast }\;$is given by:

\begin{equation*}
\;%
\begin{bmatrix}
\hat{\Omega}_{j_{h}Boot\;-t}^{\ast }(\frac{\vartheta }{2}) & , & \hat{\Omega}%
_{j_{h}Boot\;-t}^{\ast }(1-\frac{\vartheta }{2})\;%
\end{bmatrix}%
.\;
\end{equation*}

\subsubsection{Bootstrap bias corrected confidence interval}

Let $\Phi (z)=\vartheta \;$be the standard normal cumulative distribution
function, with $z_{\vartheta }=\Phi ^{-1}(\vartheta ).$ Define the
bias-correction constant $z_{\circ }\;$from the following probability $P(%
\hat{\Omega}_{j_{h}}^{\ast }\leq \hat{\Omega}_{j_{h}})=G(z_{\circ
}),~j=1,2,3,\;$ where $G(.)\;$is the cumulative distribution function of the
bootstrap distribution and 
\begin{equation*}
P(\hat{\Omega}_{j_{h}}^{\ast }\leq \hat{\Omega}_{j_{h}})=\frac{\#%
\begin{Bmatrix}
\hat{\Omega}_{j_{h}}^{\ast \lbrack p]}\;<\;\hat{\Omega}_{j_{h}}%
\end{Bmatrix}%
}{B},\;p=1,2,...,B;\;j=1,2,3;\text{for }h=1,2,...,k.
\end{equation*}%
thus 
\begin{equation}
z_{\circ }=\Phi ^{-1}%
\begin{pmatrix}
\frac{\#%
\begin{Bmatrix}
\hat{\Omega}_{j_{h}}^{\ast \lbrack p]}\;<\;\hat{\Omega}_{j_{h}}%
\end{Bmatrix}%
}{B}%
\end{pmatrix}%
,\;p=1,2,...,B;\;j=1,2,3;\text{for }h=1,2,...,k.  \label{3.2.0}
\end{equation}%
For a given $\vartheta ,$ and the bias-correction constant $z_{\circ },$ then$%
\;\;$%
\begin{equation}
\hat{\Omega}_{j_{h}Boot\;-BC}^{\ast }=G^{-1}\left[ \Phi \left( 2z_{\circ
}+z_{\vartheta }\right) \right] .  \label{3.2.1}
\end{equation}%
Thus, the approximate bootstrap-BC $100(1-\vartheta )\%\;$confidence
interval of $\hat{\Omega}_{j_{h}Boot\;-BC}^{\ast }\;$is given by:
\begin{equation*}
\;%
\begin{bmatrix}
\hat{\Omega}_{j_{h}Boot\;-BC}^{\ast }(\frac{\vartheta }{2}) & , & \hat{\Omega%
}_{j_{h}Boot\;-BC}^{\ast }(1-\frac{\vartheta }{2})\;%
\end{bmatrix}%
.\;
\end{equation*}

\subsubsection{Bootstrap bias corrected accelerated confidence interval}

Let $\Phi (z)=\vartheta \;$be the standard normal cumulative distribution
function, with $z_{\zeta }=\Phi ^{-1}(\vartheta )\;$and the bias-correction
constant $z_{\circ }\;$which is defined in \eqref{3.2.0}. Then%
\begin{equation}
\hat{\Omega}_{j_{h}Boot\;-BCa}^{\ast }=G^{-1}\left[ \Phi \left( z_{\circ }+%
\frac{z_{\circ }+z_{\vartheta }}{1-a_{j_{h}}(z_{\circ }+z_{\vartheta })}%
\right) \right] ,j=1,2,3;\text{for }h=1,2,...,k.  \label{3.2.2}
\end{equation}
where $a_{j_{h}}\;$is called the acceleration factor which is estimated by a simple jack-knife method. Let $\underline{y}_{i}\;$represent the original data with the $i$th point omitted, say $\underline{y}_{2}=y_{1;r,N}<y_{3;r,N}<...<y_{r;r,N},$ and $\hat{\Omega}_{j_{h}}^{i}=\hat{\Omega}_{j}(\underline{y}_{i})$ be the estimate of $\Omega _{j_{h}}\;$ constructed from this data, $\Omega _{1_{h}}=\alpha _{h},\;\Omega_{2_{h}}=\theta _{h}$ and $\Omega _{3_{h}}=\beta _{h}$.$\;$Let $\bar{\Omega}_{j_{h}}\;$ be the mean of the $\hat{\Omega}_{j_{h}}^{i,}$s. Then $a_{j_{h}}$ is estimated by:
\begin{equation*}
a_{h}=\frac{\sum_{i=1}^{r}\left( \bar{\Omega}_{j_{h}}-\hat{\Omega}_{j_{h}}^{i,}\right) ^{3}}{6\left[ \sum_{i=1}^{r}\left( \bar{\Omega}_{j_{h}}-\hat{\Omega}_{j_{h}}^{i,}\right) ^{2}\right] ^{\frac{3}{2}}},j=1,2,3;\text{for }h=1,2,...,k.
\end{equation*}
For more details see \cite{26} and \cite{27}. If $a_{h}=0,$ equation %
\eqref{3.2.2} reduces to equation \eqref{3.2.1}. Then, the approximate
bootstrap-BC $100(1-\vartheta )\%\;$confidence interval of $\hat{\Omega}%
_{kBoot\;-BCa}^{\ast }\;$is given by: 
\begin{equation*}
\;%
\begin{bmatrix}
\hat{\Omega}_{j_{h}Boot\;-BCa}^{\ast }(\frac{\vartheta }{2}) & , & \hat{%
\Omega}_{j_{h}Boot\;-BCa}^{\ast }(1-\frac{\vartheta }{2})\;%
\end{bmatrix}%
.\;
\end{equation*}

\subsection{Comparison between the reliability of $k$ production lines}

To compare the reliablity of the units from $k$ production lines, the
following lemma is needed.

\begin{lemma}
If the random variable $X\sim WG(\alpha ,\theta ,\beta ),$ 
then $Y=\left( \frac{T}{\alpha }\right) ^{\theta }\sim WG(1,1,\beta ).$
\end{lemma}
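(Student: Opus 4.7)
The plan is to prove the distributional identity by a direct change of variables, which is the cleanest route for a strictly monotone transformation of a continuous random variable. I will work with the cumulative distribution function first and then differentiate at the end, since the CDF of the WG distribution has a particularly compact form from equation \eqref{2.1.2}, namely $F_{X}(x)=1-\bigl(1+(x/\alpha)^{\theta}\bigr)^{-\beta}$.

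First I would note that the map $x\mapsto (x/\alpha)^{\theta}$ is strictly increasing on $(0,\infty)$ because $\alpha,\theta>0$, so $Y=(X/\alpha)^{\theta}$ takes values in $(0,\infty)$ and its CDF at $y>0$ equals
\begin{equation*}
F_{Y}(y)=\Pr\bigl((X/\alpha)^{\theta}\leq y\bigr)=\Pr\bigl(X\leq \alpha\, y^{1/\theta}\bigr)=F_{X}\bigl(\alpha\, y^{1/\theta}\bigr).
\end{equation*}
Substituting into the explicit form of $F_{X}$ collapses the inner $(\,\cdot\,/\alpha)^{\theta}$ exactly to $y$, so $F_{Y}(y)=1-(1+y)^{-\beta}$. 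Differentiating gives $f_{Y}(y)=\beta(1+y)^{-(\beta+1)}$ on $(0,\infty)$.

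Finally I would compare this against the WG density \eqref{2.1.1} evaluated at $\alpha=1$, $\theta=1$: that density is $\beta(1+y)^{-(\beta+1)}$, which matches exactly. Hence $Y\sim WG(1,1,\beta)$, as claimed.

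There is essentially no obstacle: the computation is one substitution away because the exponent $\theta$ and the scale $\alpha$ both appear only through the combination $(x/\alpha)^{\theta}$ inside the survival function. The only thing worth flagging is the evident typo in the statement, where $T$ should be $X$; I would either point this out or silently replace $T$ by $X$ in the proof. A Jacobian-based derivation would give the same result but is slightly messier because one must track the factor $(\alpha/\theta)y^{1/\theta-1}$ against the $(x/\alpha)^{\theta-1}$ piece of the density; the CDF route avoids that bookkeeping entirely.
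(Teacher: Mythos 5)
Your proof is correct: the CDF substitution $F_{Y}(y)=F_{X}(\alpha y^{1/\theta})=1-(1+y)^{-\beta}$ followed by differentiation and comparison with the density \eqref{2.1.1} at $\alpha=\theta=1$ is exactly the computation needed, and you are right that $T$ in the statement is a typo for $X$. Note that the paper itself gives no derivation at all --- its ``proof'' consists of the sentence ``The proof is easy to obtain'' followed immediately by the application to the reliability functions $R_{i}(t)=(1+t)^{-\hat{\beta}_{i}}$ --- so your argument simply supplies the omitted one-line verification rather than diverging from anything the authors actually wrote.
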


\begin{proof}
The proof is easy to obtain. If a unit $U_{i}$ is selected randomly from the production line $i$ and another one $U_{j}$ \ is selected randomly from the production line $j,$ for $i,j=1,2,...,k$ and $i\neq j.$ Based on Equation \eqref{2.1.2}, the invariance property of the MLEs of the parameters and using Lemma 1, the reliability function for any two production lines $i$ and $j$ \ at time $t$ are 
\begin{equation}
R_{i}\left( t\right) =\left( 1+t\right) ^{-\hat{\beta}_{i}}\text{ and }%
R_{j}\left( t\right) =\left( 1+t\right) ^{-\hat{\beta}_{j}};i\;,j=1,2,...,k%
\text{ },\text{ }i\neq j.  \label{3.2.3}
\end{equation}
\end{proof}

According to the transformation stated in the previous lemma, we can prove the existence and uniqueness of the MLEs, for $\beta_h$ only, graphically, see Balakrishnan and Kateri \cite{28} for more details. Equation \eqref{3.5} can be rewritten as follows:
\begin{equation}
\frac{1}{\beta _{h}}=\frac{\sum_{i=1}^{r}\left( \delta_{i}(h)+s_{i}(h)\right) \log \left( 1+\left( \frac{\zeta _{i}}{\alpha _{h}}\right) ^{\theta _{h}}\right)}{M_{r}(h)}=\text{Constant}.  \label{3.2.4}
\end{equation}

\begin{corollary}
For any two production lines $i$ and $j$ and $\hat{%
\beta}_{i}\;\leq \;\hat{\beta}_{j},\;$then $R_{i}\left( t\right) \geq
R_{j}\left( t\right) $ at any time $t.$
\end{corollary}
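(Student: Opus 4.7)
The plan is to reduce the comparison to a simple monotonicity observation about the one-parameter family $\beta\mapsto(1+t)^{-\beta}$, using the reliability representation that has already been derived in Lemma~1 and displayed in \eqref{3.2.3}.

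First I would invoke Lemma~1 together with the invariance property of the MLEs (already used just before \eqref{3.2.3}) to justify that, for every production line $h$, the reliability at the transformed time $t$ takes the reduced form $R_h(t)=(1+t)^{-\hat{\beta}_h}$. This is crucial because it eliminates $\hat{\alpha}_h$ and $\hat{\theta}_h$ from the picture, so the comparison between $R_i$ and $R_j$ depends only on the shape parameters $\hat{\beta}_i$ and $\hat{\beta}_j$. Since the statement of the corollary is phrased for $R_i$ and $R_j$ after the transformation of Lemma~1, no further reparametrisation work is required.

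Next I would fix $t\geq 0$ and rewrite $R_h(t)=\exp\{-\hat{\beta}_h\log(1+t)\}$. Because $1+t\geq 1$, we have $\log(1+t)\geq 0$, with strict inequality for $t>0$. Hence the map $\beta\mapsto e^{-\beta\log(1+t)}$ is non-increasing in $\beta$ (strictly decreasing when $t>0$). Applying this with $\hat{\beta}_i\leq\hat{\beta}_j$ yields
\[
R_i(t)=e^{-\hat{\beta}_i\log(1+t)}\;\geq\;e^{-\hat{\beta}_j\log(1+t)}=R_j(t),
\]
which is exactly the claimed inequality. The boundary case $t=0$ gives $R_i(0)=R_j(0)=1$, so the non-strict inequality is preserved there as well.

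There is essentially no serious obstacle here; the corollary is a direct monotonicity consequence of the reduced form \eqref{3.2.3}. The only point that deserves a brief mention is that the argument uses the reliability \emph{after} the transformation $Y=(T/\alpha)^{\theta}$ supplied by Lemma~1, so the inequality should be interpreted on the common transformed time scale rather than on the raw lifetime scale (on which different $\hat{\alpha}_h,\hat{\theta}_h$ would otherwise complicate the comparison). Once that interpretive remark is made, the two-line monotonicity calculation above completes the proof.
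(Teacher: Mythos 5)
Your proof is correct and follows essentially the same route as the paper: it relies on the reduced form $R_h(t)=(1+t)^{-\hat{\beta}_h}$ from Lemma~1 and Equation~\eqref{3.2.3}, and then observes that $\beta\mapsto(1+t)^{-\beta}$ is non-increasing because $\log(1+t)\geq 0$. The paper leaves this monotonicity step implicit, so your write-up simply makes explicit the argument the authors intend, including the correct caveat about working on the transformed time scale.
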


\begin{figure}[h]
	\centering
    	\includegraphics[width=110mm,scale=1]{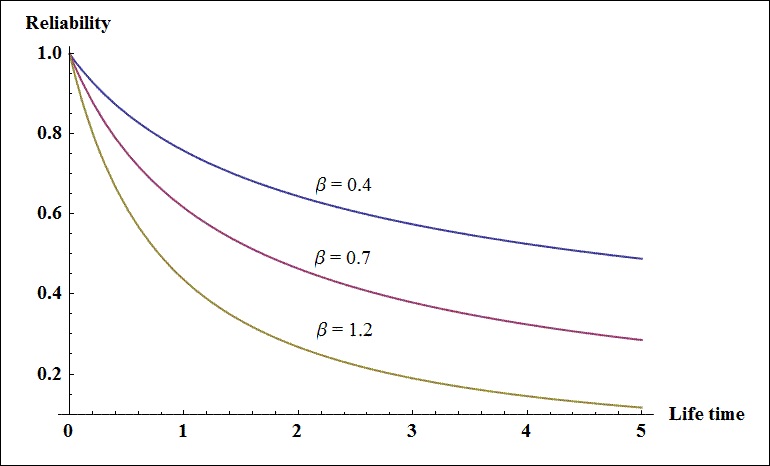}
	\caption{The effect of $\beta$ on the reliability function for some production lines.}
	\label{FIG:1}
\end{figure}

\section{Bayesian estimation based on different loss functions}\label{sec4}

Let the prior knowledge of the parameters $\alpha _{h},~\theta _{h}$ and $\beta_{h}$ be described by the following prior distributions :
\begin{equation}
\begin{array}{l}
\pi _{1}\left( \alpha _{h}\right) \propto \alpha _{h}^{\mu
_{h}-1}\;e^{-\alpha _{h}\lambda _{h}},\;\;\ \ \ \alpha _{h}>0, \\ 
\pi _{2}\left( \theta _{h}\right) \propto \theta _{h}^{q_{h}-1\;} ~e^{-\theta
_{h}w_{h}},\;\;\ \ \ \theta _{h}>0, \\ 
\pi _{3}\left( \beta _{h}\right) \propto \beta _{h}^{p_{h}-1}~~e^{-\beta
_{h}v_{h}},~~~~~\beta _{h}>0,%
\end{array} \text{~~~~~~~~where }\mu _{h},q_{h},p_{h,},\lambda _{h},w_{h},v_{h}>0,
\end{equation}
and $\alpha _{h},\theta _{h}$ and $\beta _{h}\;$are independent random variables for all $h=1,2,...,k.$ Many authors like Kundu and Howlader \cite{29}, Dey and Dey \cite{30} and Dey et. al. \cite{31} established the Bayesian estimation for their parameter models based on informative gamma priors. Mansour and Ramadan \cite{32} studied parameters estimation of the modified extended exponential distribution based on a Hybrid type II censoring scheme, they obtained the Bayesian estimates for non-informative priors and Gamma priors, and they concluded that Bayesian estimates for Gamma priors have the best performance.

Hence, the joint prior of the parameters $\alpha _{h},\theta _{h}$ and $%
\beta _{h}$ can be written as follows:%
\begin{equation}
\pi \left( \alpha _{h},\theta _{h},\beta _{h}\right) \propto \alpha
_{h}^{\mu _{h}-1}\;\theta _{h}^{q_{h}-1\;}\beta _{h}^{p_{h}-1}e^{-(\alpha
_{h}\lambda _{h}+\theta _{h}w_{h}+\beta _{h}v_{h})}.\;\;  \label{5.1}
\end{equation}%
The joint posterior density function of $\alpha _{h},\theta _{h}$ and $\beta
_{h},$ denoted by $\pi ^{\ast }(\alpha _{h},\theta _{h},\beta _{h}|\delta,s,\zeta)$ can be written as:
\begin{equation}
\pi ^{\ast }(\alpha _{h},\theta _{h},\beta _{h}|\delta,s,\zeta)=\frac{L\left( \alpha _{h},\theta _{h},\beta _{h}|\delta,s,\zeta\right) \times \pi \left( \alpha
_{h},\theta _{h},\beta _{h}\right) }{\int_{0}^{\infty }\int_{0}^{\infty
}\int_{0}^{\infty }L\left( \alpha _{h},\theta _{h},\beta _{h}|\delta,s,\zeta\right) \times \pi \left( \alpha _{h},\theta
_{h},\beta _{h}\right) d\alpha _{h}d\theta _{h}d\beta _{h}}.  \label{5.2}
\end{equation}
The joint posterior distribution, which combines the information in both the prior distribution and the likelihood, leads to containing more accurate information and getting a narrower range of possible values for the parameters. The Bayes estimate of any function of the parameters, say $g\left( \alpha _{h},\theta _{h},\beta _{h}\right), h=1,2,...,k,$ using SEL function is given by:
\begin{equation}
\hat{g}_{BS}\left( \alpha _{h},\theta _{h},\beta _{h}\right)=E_{\alpha_{h},\theta _{h},\beta _{h}|\delta,s,\zeta }\left[ g\left( \alpha _{h},\theta _{h},\beta _{h}\right) \right]=\frac{\int_{0}^{\infty }\int_{0}^{\infty }\int_{0}^{\infty }g\left(\alpha _{h},\theta _{h},\beta _{h}\right) \times L\left( \alpha _{h},\theta_{h},\beta _{h}\right) \times \pi \left( \alpha _{h},\theta _{h},\beta_{h}\right) d\alpha _{h}d\theta _{h}d\beta _{h}}{\int_{0}^{\infty}\int_{0}^{\infty }\int_{0}^{\infty }L\left( \alpha _{h},\theta _{h},\beta_{h}\right) \times \pi \left( \alpha _{h},\theta _{h},\beta _{h}\right)d\alpha _{h}d\theta _{h}d\beta _{h}}, \label{5.3}
\end{equation}
while the Bayes estimate of $g\left( \alpha _{h},\theta _{h},\beta_{h}\right) ,$ for all $h=1,2,...,k,$ using LINEX loss function is given by:
\begin{equation}
\hat{g}_{BL}\left( \alpha _{h},\theta _{h},\beta _{h}\right) =\frac{-1}{c}%
\log \left[ E_{\alpha _{h},\theta _{h},\beta _{h}|\delta,s,\zeta }\left[ e^{-c\;g\left( \alpha _{h},\theta _{h},\beta
_{h}\right) }\right] \right] ,\;c\neq 0,  \label{5.003}
\end{equation}

where
\begin{equation}
E_{\alpha _{h},\theta _{h},\beta _{h}|\delta,s,\zeta}\left[ e^{-cg\left( \alpha _{h},\theta _{h},\beta _{h}\right) }\right]=\frac{\int_{0}^{\infty }\int_{0}^{\infty }\int_{0}^{\infty}e^{-cg\left( \alpha _{h},\theta _{h},\beta _{h}\right) }\times L\left(\alpha _{h},\theta _{h},\beta _{h}\right) \times \pi \left( \alpha_{h},\theta _{h},\beta _{h}\right) d\alpha _{h}d\theta _{h}d\beta _{h}}{\int_{0}^{\infty }\int_{0}^{\infty }\int_{0}^{\infty }L\left( \alpha_{h},\theta _{h},\beta _{h}\right) \times \pi \left( \alpha _{h},\theta_{h},\beta _{h}\right) d\alpha _{h}d\theta _{h}d\beta _{h}}.
\label{5.0003}
\end{equation}

It is noticed that the ratio of the two integrals given by \eqref{5.3} and \eqref{5.0003} cannot be obtained in an explicit form. In this case, the MCMC technique is used to generate samples from the posterior distributions and then the Bayes estimates of the parameters $\alpha _{h},\theta _{h}$ and $\beta _{h},$ for all $h=1,2,...,k,$ will be computed. The main theme of the MCMC technique is to compute approximate values for the integrals in \eqref{5.3} and \eqref{5.0003}. An important sup-class of MCMC methods is Gibbs sampling and more general Metropolis within Gibbs samplers. The Metropolis algorithm is a random walk that uses an acceptance/rejection rule to converge to the target distribution. The Metropolis algorithm was first proposed by Metropolis et al. \cite{33} and was then generalized by Hastings \cite{34}. From \eqref{3.2}, \eqref{5.1} and \eqref{5.2}, the joint posterior density function of $\alpha _{h},\theta _{h}$ and $\beta _{h}$ , can be written as:
\begin{eqnarray}
\pi ^{\ast }(\alpha _{h},\theta _{h},\beta _{h}|\delta,s,\zeta)\propto \prod_{h=1}^{k}\frac{\theta_{h}^{M_{r}(h)+q_{h}-1}\beta _{h}^{M_{r}(h)+p_{h}-1}}{\alpha_{h}^{M_{r}(h)-\mu _{h}+1}}\times \exp \left\{ -(\alpha _{h}\lambda_{h}+\theta _{h}w_{h}+\beta _{h}v_{h})\right\} \notag \\ \times \exp \left\{ -\sum_{i=1}^{r}\left[ \left( \beta _{h}\left( \delta_{i}(h)+s_{i}(h)\right) +\delta _{i}(h)\right) \log \left( 1+\left( \frac{\zeta _{i}}{\alpha _{h}}\right) ^{\theta _{h}}\right) \right] \right\}\times \notag \\ \exp \left\{ \left( \theta _{h}-1\right) \left[ \sum_{i=1}^{r}\delta_{i}(h)\log \zeta _{i}-M_{r}(h)\log \alpha _{h}\right] \right\} .
\label{5.04}
\end{eqnarray}

The conditional posterior densities of $\alpha _{h},\theta _{h}$ and $\beta_{h}$ can also be written as:%
\begin{eqnarray}
\pi _{1}^{\ast }(\alpha _{h}|\theta _{h},\beta _{h},\delta,s,\zeta)\propto \prod_{h=1}^{k}\alpha _{h}^{\mu _{h}-M_{r}(h)-1} \times \exp \left\{ -\left( \theta _{h}-1\right) \left[ M_{r}(h)\log \alpha_{h}\right] -\alpha _{h}\lambda _{h}\right\} \notag \\ \times \exp \left\{ -\sum_{i=1}^{r}\left[ \left( \beta _{h}\left( \delta_{i}(h)+s_{i}(h)\right) +\delta _{i}(h)\right) \log \left( 1+\left( \frac{\zeta _{i}}{\alpha _{h}}\right) ^{\theta _{h}}\right) \right] \right\}
\label{5.5}
\end{eqnarray}

\begin{eqnarray}
\pi _{2}^{\ast }(\theta _{h}|\alpha _{h},\beta _{h},\delta,s,\zeta)\propto \prod_{h=1}^{k}\theta _{h}^{M_{r}(h)+q_{h}-1} \times \exp \left\{ -\theta _{h}\left[ w_{h}+M_{r}(h)\log \alpha_{h}-\sum_{i=1}^{r}\delta _{i}(h)\log \zeta _{i}\right] \right\} \notag \\ \times \exp \left\{ -\sum_{i=1}^{r}\left[ \left( \beta _{h}\left( \delta_{i}(h)+s_{i}(h)\right) +\delta _{i}(h)\right) \log \left( 1+\left( \frac{\zeta _{i}}{\alpha _{h}}\right) ^{\theta _{h}}\right) \right] \right\}
\label{5.6}
\end{eqnarray}

\begin{equation}
\pi _{3}^{\ast }(\beta _{h}|\alpha _{h},\theta _{h},\delta,s,\zeta)\equiv gamma\left[ M_{r}(h)+p_{h},v_{h}+\sum_{i=1}^{r}\left[ \left( \delta _{i}(h)+s_{i}(h)\right) \log \left( 1+\left( \frac{\zeta _{i}}{\alpha _{h}}\right) ^{\theta _{h}}\right) \right] \right]
\label{5.7}
\end{equation}

Now, the following steps illustrate the method of the Metropolis--Hastings algorithm within Gibbs sampling to generate the posterior samples as suggested by Tierney \cite{35}, and then the Bayes estimates and the corresponding credible intervals can be obtained:

\begin{enumerate}
\item[(1)] Start with $\left( \alpha _{h}^{\left( 0\right) }=\hat{\alpha}%
_{h},\;\theta _{h}^{\left( 0\right) }=\hat{\theta}_{h}\text{ and }\beta
_{h}^{\left( 0\right) }=\hat{\beta}_{h}\right) .$

\item[(2)] Put $l=1.$

\item[(3)] Generate $\beta _{h}^{(l)}$ from 
\begin{equation*}
gamma\left[ M_{r}(h)+p_{h},v_{h}+\sum_{i=1}^{r}\left[ \left( \delta
_{i}(h)+s_{i}(h)\right) \log \left( 1+\left( \frac{\zeta _{i}}{\alpha _{h}}%
\right) ^{\theta _{h}}\right) \right] \right]
\end{equation*}

\item[(4)] Using the following Metropolis-Hastings method, generate $\alpha
_{h}^{\left( l\right) }$ and $\theta _{h}^{\left( l\right) }$ from %
\eqref{5.5} and \eqref{5.6} with the suggested normal distributions%
\begin{equation*}
N(\alpha _{h}^{(l-1)},var\left( \alpha _{h}\right) )\text{ and }N(\theta
_{h}^{(l-1)},var\left( \theta _{h}\right) ),\text{respectively,}
\end{equation*}%
where $var\left( \alpha _{h}\right) $ and $var\left( \theta _{h}\right) $
can be obtained from the main diagonal in asymptotic inverse Fisher
information matrix \eqref{3.7}.

\begin{description}

\item[i-] Generate a proposal $\alpha _{h}^{\ast }$ from $N(\alpha
_{h}^{(l-1)},var\left( \alpha _{h}\right) )$ and $\theta _{h}^{\ast }$ from $%
N(\theta _{h}^{(l-1)},var\left( \theta _{h}\right) ).$

\item[ii-] Evaluate the acceptance probabilities

\begin{equation}
\begin{array}{l}
\rho _{\alpha _{h}}=\min \left[ 1,\frac{\pi _{1}^{\ast }(\alpha _{h}^{\ast
}|\theta _{h}^{\left( l-1\right) },\beta _{h}^{\left( l\right) },\delta,s,\zeta)}{\pi _{1}^{\ast }(\alpha _{h}^{\left(
l-1\right) }|\theta _{h}^{\left( l-1\right) },\beta ^{\left( l\right) },\delta,s,\zeta)}\right] , \\ 
\\ 
\rho _{\theta _{h}}=\min \left[ 1,\frac{\pi _{2}^{\ast }(\theta _{h}^{\ast
}|\alpha _{h}^{\left( l\right) },\beta _{h}^{\left( l\right) },\delta,s,\zeta}{\pi _{2}^{\ast }(\theta _{h}^{\left(
l-1\right) }|\alpha _{h}^{\left( l\right) },\beta _{h}^{\left( l\right) },\delta,s,\zeta}\right] \\ 
\end{array}.
\end{equation}

\item[iii-] Generate $u_{1}$ and $u_{2}$ from a Uniform $(0,1)$ distribution.

\item[iv-] If $u_{1}\leq $ $\rho _{\alpha _{h}}$ , then accept the proposal
and set $\alpha _{h}^{\left( l\right) }=\alpha _{h}^{\ast },$ else set $%
\alpha _{h}^{\left( l\right) }=\alpha _{h}^{\left( l-1\right) }.$

\item[v-] If $u_{2}\leq $ $\rho _{\theta _{h}}$ , then accept the proposal
and set $\theta _{h}^{\left( l\right) }=\theta _{h}^{\ast },$ else set $%
\theta _{h}^{\left( l\right) }=\theta _{h}^{\left( l-1\right) }.$

\end{description}

\item[(5)] Compute $\alpha _{h}^{\left( l\right) }$ and $\theta _{h}^{\left(
l\right) }.$

\item[(6)] Put $l=l+1.$

\item[(7)] Repeat Steps $3-6$ $Q$ times.

\item[(8)] In order to guarantee the convergence and to remove the influence of the initial values selection, the first $M$ simulated points are ignored. The selected samples are $\alpha _{h}^{\left( l\right) }$ and $\theta _{h}^{\left( l\right) },$ $l=M+1,...,Q,$ for sufficiently large $Q$. The approximate Bayes estimates for $\alpha _{h},\theta _{h}$ and $\beta_{h} $ based on SEL are:

\begin{equation}
\begin{array}{c}
\alpha _{hBS}=\frac{1}{Q-M}\sum_{l=M+1}^{Q}\alpha _{h}^{\left( l\right) },
\\ 
\\ 
\theta _{hBS}=\frac{1}{Q-M}\sum_{i=M+1}^{Q}\theta _{h}^{\left( l\right) },
\\ 
\\
\beta _{hBS}=\frac{1}{Q-M}\sum_{i=M+1}^{Q}\beta _{h}^{\left( l\right) }.%
\end{array},
\end{equation}

and the estimates for the same parameters under LINEX loss function are

\begin{equation}
\begin{array}{c}
\alpha _{hBL}=\frac{-1}{c}\log \left[ \frac{1}{Q-M}\sum_{l=M+1}^{Q}e^{-c\alpha _{h}^{\left( l\right) }}\right] , \\ 
\\ 
\theta _{hBL}=\frac{-1}{c}\log \left[ \frac{1}{Q-M}\sum_{l=M+1}^{Q}e^{-c\theta _{h}^{\left( l\right) }}\right] , \\ \\ 
\beta _{hBL}=\frac{-1}{c}\log \left[ \frac{1}{Q-M}\sum_{l=M+1}^{Q}e^{-c\beta _{h}^{\left( l\right) }}\right] .
\end{array}.
\end{equation}

\item[(9)] To calculate the credible intervals (CRIs) of $\Omega _{j_{h}}$, where $\Omega _{1_{h}}=\alpha _{h},\;\Omega _{2_{h}}=\theta _{h}$ and $\Omega
_{3_{h}}=\beta _{h},$ the quantiles of the sample are assumed to be the endpoints of the intervals. Sort $\left\{ \Omega _{j_{h}}^{M+1},\Omega_{j_{h}}^{M+2},...,\Omega _{j_{h}}^{Q}\right\} $ as $\left\{ \Omega_{j_{h}}^{\left( 1\right) },\Omega _{j_{h}}^{\left( 2\right) },...,\Omega_{j_{h}}^{\left( Q-M\right) }\right\} ,j=1,2,3;$for $h=1,2,...,k.$ Hence the $100\left( 1-\vartheta \right)\%$ symmetric credible interval of $\Omega
_{j_{h}}$ is given by:
\end{enumerate}
\begin{equation}
\begin{bmatrix}
\Omega _{j_{h}}\left( \frac{\vartheta }{2}\left( Q-M\right) \right) & , & 
\Omega _{j_{h}}\left( \left( 1-\frac{\vartheta }{2}\right) \left( Q-M\right)
\right)%
\end{bmatrix}%
.
\end{equation}

\section{Simulation Study}\label{sec5}
The prior knowledge of the number of failures before the test procedure plays a meaningful role in determining the appropriate sampling plans and developing good estimators for parameters. In this section, a simulation study is performed under different JPROG-II-C schemes where 1000 JPROG-II-C samples are generated from the two WG populations, based on distinct sample sizes, for calculating the approximation of the expected values of the number of failures from the first production line before the test procedure (A.E.B). The mean of the exact number of failures after the test procedure (M.E.A) is also computed from the first production line. The comparison between the A.E.B and M.E.A is necessary to judge the effectiveness of the approximation introduced by Parsi and Bairamov \cite{36} throughout our model. Parsi and Bairamov \cite{36} used their approximation in the case of JPROG-II-C samples generated from two Weibull Gamma and two Pareto populations. They found that the results of A.E.B and M.E.A were close to each other in most cases. After applying this approximation to the simulated JPROG-II-C samples generated from two WG populations, we reached the same results introduced by Parsi and Bairamov and found that they are close together in most cases. In addition to these close results, the A.E.B and M.E.A. values seem to be the same for large values of $N$. The calculations for our model are performed based on the following assumptions, $X_{1}$ is chosen from $WG(6,3,4)$ and $X_{2}$ from $WG(2,5,1.5),~p=0.13$. Once again $X_{1}$ is chosen to from $WG(2,5,1.5)$ and $X_{2}$ from $WG(6,3,4),~p=0.87,$ where $p=P(X_{1}<X_{2})=\int_{0}^{\infty }\bar{G}(x;\alpha _{2},\theta _{2},\beta_{2})dF(x;\alpha _{1},\theta _{1},\beta _{1}),~X_{1}$ and $X_{2}$ represent the lifetime of the first production line units and the second production line units respectively. Also $\bar{G}(x;\alpha _{2},\theta _{2},\beta _{2})$ is the survival function of $X_{2}$, while $F(x;\alpha _{1},\theta_{1},\beta _{1})$ is the CDF of $X_{1}.$

\begin{table}[!]
    \centering
\begin{tabular}{ccccccccc}
\multicolumn{9}{c}{Table 1. The Comparison between A.E.B and M.E.A for the first production line.} \\ \hline
&  &  &  &  & \multicolumn{2}{c}{$p=0.13$} & \multicolumn{2}{c}{$p=0.87$} \\ 
\cline{6-9}
Scheme No. & $N$ & $r$ & Scheme ($R$) & $(n_{1},n_{2})$ & A.E.B & M.E.A & 
A.E.B & M.E.A \\ \hline
\multicolumn{1}{l}{$%
\begin{array}{c}
1 \\ 
2 \\ 
3%
\end{array}%
$} & \multicolumn{1}{l}{$30$} & \multicolumn{1}{l}{$10$} & $(0,...,0,20)$ & 
\multicolumn{1}{l}{$%
\begin{array}{c}
(20,10) \\ 
(15,15) \\ 
(10,20)%
\end{array}%
$} & $%
\begin{array}{c}
3.091 \\ 
1.667 \\ 
0.858%
\end{array}%
$ & $%
\begin{array}{c}
3.028 \\ 
1.664 \\ 
0.930%
\end{array}%
$ & $%
\begin{array}{c}
9.142 \\ 
8.332 \\ 
6.909%
\end{array}%
$ & $%
\begin{array}{c}
9.109 \\ 
8.319 \\ 
6.974%
\end{array}%
$ \\ 
\multicolumn{1}{l}{$%
\begin{array}{c}
4 \\ 
5 \\ 
6%
\end{array}%
$} & \multicolumn{1}{l}{} & \multicolumn{1}{l}{} & $(0,0,0,0,10,10,0,0,0,0)$
& \multicolumn{1}{l}{$%
\begin{array}{c}
(20,10) \\ 
(15,15) \\ 
(10,20)%
\end{array}%
$} & $%
\begin{array}{c}
5.629 \\ 
9.799 \\ 
2.908%
\end{array}%
$ & $%
\begin{array}{c}
4.807 \\ 
3.477 \\ 
2.198%
\end{array}%
$ & $%
\begin{array}{c}
9.334 \\ 
8.255 \\ 
14.192%
\end{array}%
$ & $%
\begin{array}{c}
7.832 \\ 
6.656 \\ 
5.250%
\end{array}%
$ \\ 
\multicolumn{1}{l}{$%
\begin{array}{c}
7 \\ 
8 \\ 
9%
\end{array}%
$} & \multicolumn{1}{l}{} & \multicolumn{1}{l}{} & $(20,0,...,0)$ & 
\multicolumn{1}{l}{$%
\begin{array}{c}
(20,10) \\ 
(15,15) \\ 
(10,20)%
\end{array}%
$} & $%
\begin{array}{c}
6.348 \\ 
4.742 \\ 
3.211%
\end{array}%
$ & $%
\begin{array}{c}
6.506 \\ 
4.800 \\ 
3.243%
\end{array}%
$ & $%
\begin{array}{c}
6.789 \\ 
5.258 \\ 
3.652%
\end{array}%
$ & $%
\begin{array}{c}
6.776 \\ 
5.137 \\ 
3.479%
\end{array}%
$ \\ 
\multicolumn{1}{l}{$%
\begin{array}{c}
10 \\ 
11 \\ 
12%
\end{array}%
$} & \multicolumn{1}{l}{} & \multicolumn{1}{l}{$15$} & $(0,...,0,15)$ & 
\multicolumn{1}{l}{$%
\begin{array}{c}
(20,10) \\ 
(15,15) \\ 
(10,20)%
\end{array}%
$} & $%
\begin{array}{c}
5.829 \\ 
3.056 \\ 
1.508%
\end{array}%
$ & $%
\begin{array}{c}
5.825 \\ 
2.926 \\ 
1.391%
\end{array}%
$ & $%
\begin{array}{c}
13.492 \\ 
11.944 \\ 
9.171%
\end{array}%
$ & $%
\begin{array}{c}
13.582 \\ 
12.140 \\ 
9.137%
\end{array}%
$ \\ 
\multicolumn{1}{l}{$%
\begin{array}{c}
13 \\ 
14 \\ 
15%
\end{array}%
$} & \multicolumn{1}{l}{} & \multicolumn{1}{l}{} & $(15,0,...,0)$ & 
\multicolumn{1}{l}{$%
\begin{array}{c}
(20,10) \\ 
(15,15) \\ 
(10,20)%
\end{array}%
$} & $%
\begin{array}{c}
9.785 \\ 
7.177 \\ 
4.444%
\end{array}%
$ & $%
\begin{array}{c}
9.846 \\ 
7.337 \\ 
4.841%
\end{array}%
$ & $%
\begin{array}{c}
10.556 \\ 
7.823 \\ 
5.215%
\end{array}%
$ & $%
\begin{array}{c}
10.056 \\ 
7.605 \\ 
5.165%
\end{array}%
$ \\ 
\multicolumn{1}{l}{$%
\begin{array}{c}
16 \\ 
17 \\ 
18%
\end{array}%
$} & \multicolumn{1}{l}{} & \multicolumn{1}{l}{$20$} & $(0,...,0,10)$ & 
\multicolumn{1}{l}{$%
\begin{array}{c}
(20,10) \\ 
(15,15) \\ 
(10,20)%
\end{array}%
$} & $%
\begin{array}{c}
10.039 \\ 
5.508 \\ 
2.545%
\end{array}%
$ & $%
\begin{array}{c}
10.153 \\ 
5.683 \\ 
2.421%
\end{array}%
$ & $%
\begin{array}{c}
17.455 \\ 
14.492 \\ 
9.961%
\end{array}%
$ & $%
\begin{array}{c}
17.532 \\ 
14.347 \\ 
9.809%
\end{array}%
$ \\ 
\multicolumn{1}{l}{$%
\begin{array}{c}
19 \\ 
20 \\ 
21%
\end{array}%
$} & \multicolumn{1}{l}{} & \multicolumn{1}{l}{} & $(10,0,...,0)$ & 
\multicolumn{1}{l}{$%
\begin{array}{c}
(20,10) \\ 
(15,15) \\ 
(10,20)%
\end{array}%
$} & $%
\begin{array}{c}
13.184 \\ 
9.866 \\ 
9.863%
\end{array}%
$ & $%
\begin{array}{c}
13.216 \\ 
9.886 \\ 
6.615%
\end{array}%
$ & $%
\begin{array}{c}
10.137 \\ 
10.134 \\ 
6.816%
\end{array}%
$ & $%
\begin{array}{c}
13.331 \\ 
10.113 \\ 
6.796%
\end{array}%
$ \\ 
\multicolumn{1}{l}{$%
\begin{array}{c}
22 \\ 
23 \\ 
24%
\end{array}%
$} & \multicolumn{1}{l}{$50$} & \multicolumn{1}{l}{$20$} & $(0,...,0,30)$ & 
\multicolumn{1}{l}{$%
\begin{array}{c}
(30,20) \\ 
(25,25) \\ 
(20,30)%
\end{array}%
$} & $%
\begin{array}{c}
5.296 \\ 
3.624 \\ 
2.440%
\end{array}%
$ & $%
\begin{array}{c}
4.980 \\ 
3.391 \\ 
2.389%
\end{array}%
$ & $%
\begin{array}{c}
17.560 \\ 
16.376 \\ 
14.704%
\end{array}%
$ & $%
\begin{array}{c}
17.669 \\ 
16.546 \\ 
14.974%
\end{array}%
$ \\ 
\multicolumn{1}{l}{$%
\begin{array}{c}
25 \\ 
26 \\ 
27%
\end{array}%
$} & \multicolumn{1}{l}{} & \multicolumn{1}{l}{} & $(30,...,0,0)$ & 
\multicolumn{1}{l}{$%
\begin{array}{c}
(30,20) \\ 
(25,25) \\ 
(20,30)%
\end{array}%
$} & $%
\begin{array}{c}
11.749 \\ 
9.761 \\ 
7.679%
\end{array}%
$ & $%
\begin{array}{c}
11.897 \\ 
9.826 \\ 
7.875%
\end{array}%
$ & $%
\begin{array}{c}
12.321 \\ 
10.239 \\ 
8.251%
\end{array}%
$ & $%
\begin{array}{c}
12.073 \\ 
10.033 \\ 
8.202%
\end{array}%
$ \\ 
\multicolumn{1}{l}{$%
\begin{array}{c}
28 \\ 
29 \\ 
30%
\end{array}%
$} & \multicolumn{1}{l}{} & \multicolumn{1}{l}{$25$} & $(0,...,0,25)$ & 
\multicolumn{1}{l}{$%
\begin{array}{c}
(30,20) \\ 
(25,25) \\ 
(20,30)%
\end{array}%
$} & $%
\begin{array}{c}
7.631 \\ 
5.154 \\ 
3.413%
\end{array}%
$ & $%
\begin{array}{c}
7.457 \\ 
4.844 \\ 
3.156%
\end{array}%
$ & $%
\begin{array}{c}
21.587 \\ 
19.846 \\ 
17.369%
\end{array}%
$ & $%
\begin{array}{c}
21.690 \\ 
20.210 \\ 
17.576%
\end{array}%
$ \\ 
\multicolumn{1}{l}{$%
\begin{array}{c}
31 \\ 
32 \\ 
33%
\end{array}%
$} & \multicolumn{1}{l}{} & \multicolumn{1}{l}{} & $(25,0,...,0)$ & 
\multicolumn{1}{l}{$%
\begin{array}{c}
(30,20) \\ 
(25,25) \\ 
(20,30)%
\end{array}%
$} & $%
\begin{array}{c}
14.788 \\ 
12.316 \\ 
9.789%
\end{array}%
$ & $%
\begin{array}{c}
14.842 \\ 
12.428 \\ 
9.941%
\end{array}%
$ & $%
\begin{array}{c}
15.211 \\ 
12.684 \\ 
10.212%
\end{array}%
$ & $%
\begin{array}{c}
15.138 \\ 
12.560 \\ 
10.070%
\end{array}%
$ \\ 
\multicolumn{1}{l}{$%
\begin{array}{c}
34 \\ 
35 \\ 
36%
\end{array}%
$} & \multicolumn{1}{l}{$70$} & \multicolumn{1}{l}{$30$} & $(0,...,0,40)$ & 
\multicolumn{1}{l}{$%
\begin{array}{c}
(40,30) \\ 
(35,35) \\ 
(30,40)%
\end{array}%
$} & $%
\begin{array}{c}
7.437 \\ 
5.651 \\ 
4.258%
\end{array}%
$ & $%
\begin{array}{c}
7.058 \\ 
5.306 \\ 
4.129%
\end{array}%
$ & $%
\begin{array}{c}
25.742 \\ 
24.349 \\ 
22.563%
\end{array}%
$ & $%
\begin{array}{c}
25.906 \\ 
24.734 \\ 
22.976%
\end{array}%
$ \\ 
\multicolumn{1}{l}{$%
\begin{array}{c}
37 \\ 
38 \\ 
39%
\end{array}%
$} & \multicolumn{1}{l}{} & \multicolumn{1}{l}{} & $(40,...,0,0)$ & 
\multicolumn{1}{l}{$%
\begin{array}{c}
(40,30) \\ 
(35,35) \\ 
(30,40)%
\end{array}%
$} & $%
\begin{array}{c}
16.909 \\ 
14.787 \\ 
12.673%
\end{array}%
$ & $%
\begin{array}{c}
17.100 \\ 
14.763 \\ 
12.761%
\end{array}%
$ & $%
\begin{array}{c}
17.327 \\ 
15.213 \\ 
13.091%
\end{array}%
$ & $%
\begin{array}{c}
17.209 \\ 
15.083 \\ 
12.824%
\end{array}%
$ \\ 
\multicolumn{1}{l}{$%
\begin{array}{c}
40 \\ 
41 \\ 
42%
\end{array}%
$} & \multicolumn{1}{l}{} & \multicolumn{1}{l}{$35$} & $(0,...,0,35)$ & 
\multicolumn{1}{l}{$%
\begin{array}{c}
(40,30) \\ 
(35,35) \\ 
(30,40)%
\end{array}%
$} & $%
\begin{array}{c}
9.612 \\ 
7.251 \\ 
5.417%
\end{array}%
$ & $%
\begin{array}{c}
9.224 \\ 
6.916 \\ 
5.033%
\end{array}%
$ & $%
\begin{array}{c}
29.583 \\ 
27.749 \\ 
25.388%
\end{array}%
$ & $%
\begin{array}{c}
29.918 \\ 
28.146 \\ 
25.779%
\end{array}%
$ \\ 
\multicolumn{1}{l}{$%
\begin{array}{c}
43 \\ 
44 \\ 
45%
\end{array}%
$} & \multicolumn{1}{l}{} & \multicolumn{1}{l}{} & $(35,...,0,0)$ & 
\multicolumn{1}{l}{$%
\begin{array}{c}
(40,30) \\ 
(35,35) \\ 
(30,40)%
\end{array}%
$} & $%
\begin{array}{c}
19.795 \\ 
17.313 \\ 
14.835%
\end{array}%
$ & $%
\begin{array}{c}
19.896 \\ 
17.391 \\ 
14.858%
\end{array}%
$ & $%
\begin{array}{c}
20.165 \\ 
17.687 \\ 
15.205%
\end{array}%
$ & $%
\begin{array}{c}
20.123 \\ 
17.576 \\ 
15.050%
\end{array}%
$ \\ \hline
\end{tabular}
\end{table}

It is noted that the A.E.B and M.E.A. values are close together in all schemes except for schemes No. 4, 5, and 6. Also, for large values of $N$, the A.E.B and M.E.A. values seem to be the same.

\section{Applications}\label{sec6}
In this section, our model is applied in the case of two WG populations. First we will show the results through a simulated example, then we will apply on a real data example.

\subsection{Simulated example}

An approximation is used for calculating the expected values of the number of failures before the test procedure. The exact values of the number of failures are also observed after the test procedure. The approximated values are compared with the exact values through a simulated example. A simulated data is generated in the case of $k=2$ and the following algorithm shows how the simulated JPROG-II-C data is generated from the two production lines.

\begin{enumerate}
\item[(1)] Generate a random sample with size $n_{1}$ from $WG(\alpha
_{1},\theta _{1},\beta _{1}).$

\item[(2)] Generate another random sample with size $n_{2}$ from $WG(\alpha
_{2},\theta _{2},\beta _{2}),$ such that the two samples, with sizes $n_{1}$
and $n_{2}$ are independent.

\item[(3)] Combine the two samples in one sample with size $N=n_{1}+n_{2}.$

\item[(4)] Sort the joint sample.

\item[(5)] Specify the censoring size $r$ and the censoring scheme $R_{i}$ which represent the number of the removed units at the time of the $i$th failure from the joint sample, $i=1,2,...,r.$

\item[(6)] Specify $s_{i}(1)$ and $s_{i}(2)$ which represent the number of the removed units at the time of the $i$th failure from the samples No.$1$ and No.2 respectively. So, $R_{i}=\sum_{h=1}^{2}s_{i}(h),$ $i=1,2,...,r.$

\item[(7)] Using the famed algorithm described in Balakrishnan and Sandhu 
\cite{37} for generating progressive Type -II samples, the ordered
JPROG-II-C is:
\begin{equation*}
\zeta _{1;r,N}^{\mathbf{R}}<\zeta _{2;r,N}^{\mathbf{R}}<...<\zeta _{r;r,N}^{\mathbf{R}}.
\end{equation*}

\item[(8)] Define the two indicators $\delta _{i}(1)$ and $\delta _{i}(2)$
that determine if the $i$th failure belongs to sample No.1 or
sample No.2 respectively such that $\sum_{h=1}^{2}\delta _{i}(h)=1,$ for $%
i=1,2,...,r.$

\item[(9)] Use the JPROG-II-C sample to compute the MLEs for the model
parameters. The Newton--Raphson method is applied for solving the nonlinear
system to obtain the MLEs of the parameters.

\item[(10)] Compute the 95\% bootstrap confidence intervals for the model
parameters, using the steps described in Section 3.

\item[(11)] Compute the Bayes estimates of the model parameters based on MCMC algorithm described in Section 4.
\end{enumerate}

\begin{center}
\begin{tabular}{llllllll}
\multicolumn{8}{c}{Table 2. Simulated data from the first production line}
\\ \hline
$3.8437$ & $4.7634$ & $4.8869$ & $5.3066$ & $5.7811$ & $5.8661$ & $6.2278$ & 
$6.7398$ \\ \hline
$4.318$ & $4.7833$ & $5.0482$ & $5.3785$ & $5.8257$ & $5.9144$ & $6.5284$ & $%
6.7949$ \\ \hline
$4.677$ & $4.8032$ & $5.1756$ & $5.6429$ & $5.853$ & $5.9308$ & $6.6121$ & $%
7.3244$ \\ \hline
\end{tabular}
\end{center}

Table 2 shows the simulated failure times from the first production line where the initial values for the parameters are chosen as follows: $\alpha _{1}=7.2,~\theta _{1}=9,~\beta _{1}=5.5$ with sample size $n_{1}=24.$

\begin{center}
\begin{tabular}{llllllll}
\multicolumn{8}{c}{Table 3. Simulated data from the second production line}
\\ \hline
$1.6875$ & $7.3303$ & $11.09$ & $37.016$ & $112.776$ & $496.943$ & $8761.22$
& $33040.6$ \\ \hline
$1.8412$ & $7.3971$ & $11.7046$ & $59.0379$ & $139.534$ & $764.616$ & $%
21450.3$ & $39820.6$ \\ \hline
$4.497$ & $10.4974$ & $12.5864$ & $74.212$ & $172.122$ & $6157.02$ & $%
30766.1 $ &  \\ \hline
\end{tabular}
\end{center}

Table 3 shows the simulated failure times from the production line No. 2 where the initial values for the parameters are chosen in this case as follows: $\alpha _{2}=2,~\theta _{2}=3,~\beta _{2}=0.06$ with sample size $n_{2}=23.$ It is clear that the value of $\beta _{2}$ is less than the value of $\beta _{1}$. So from Corollary 1, the units manufactured by the second production line have higher reliability than those manufactured by the first production line. Also, the difference in Table 2 and Table 3 is easily noticed, in addition to the big difference between failure times in Table 3 and the small difference between failure times in Table 2 due to the effect of $\beta _{1}$ and $\beta _{2}$ values on the units' reliability in each production line, and the random withdrawal of the removed units which is a specific identity for progressive censoring.

\begin{center}
\begin{tabular}{llllllllll}
\multicolumn{10}{c}{Table 4. Simulated JPROG-II-C data from the two
production lines} \\ \hline
$1.6875$ & $1.8412$ & $3.8437$ & $4.318$ & $4.497$ & $4.677$ & $4.7634$ & $%
4.8032$ & $4.8869$ & $5.0482$ \\ \hline
\end{tabular}
\end{center}

Table 4 shows the simulated JPROG-II-C data from the two production lines with joint progressive censoring schemes $N=n_{1}+n_{2},~r=10$. $R=\left(2,2,0,2,2,1,1,2,0,25\right) $ is chosen based on the choice of $s(1)$ and $s(2),$ where $s(1)=\left( 2,0,0,0,2,0,1,2,0,9\right)$ and $s(2)=\left(0,2,0,2,0,1,0,0,0,16\right)$, such that $R_{i}=\sum_{h=1}^{2}s_{i}(h),~i=1,2,...,10. $

\begin{center}
\begin{tabular}{cccccccc}
\multicolumn{8}{c}{Table 5. Different point estimates for $\alpha
_{1},\theta _{1},\beta _{1};\alpha _{2},\theta _{2},\beta _{2}.$} \\ \hline
& $\left( .\right) _{ML}$ & $\left( .\right) _{Boot\;-p}$ & $\left( .\right)
_{Boot\;-t}$ & $\left( .\right) _{_{BS}}$ & \multicolumn{3}{c}{$\left(
.\right) _{BL}$} \\ \cline{6-8}
& \ \ \ \ \ \ \  & \ \ \ \ \ \ \ \ \ \ \ \  & \ \ \ \ \ \ \ \ \ \  & \  & $%
c=0.0001$ & $c=-2$ & $c=2$ \\ \hline
$\hat{\alpha}_{1}$ & $6.3647$ & $6.320$ & $6.1146$ & $6.4491$ & $6.4491$ & $%
6.4495$ & $6.4488$ \\ \hline
$\hat{\theta}_{1}$ & $10.6394$ & $11.2044$ & $9.8411$ & $10.575$ & $10.575$
& $10.5773$ & $10.572$ \\ \hline
$\hat{\beta}_{1}$ & $5.735$ & $5.5335$ & $4.2986$ & $2.2711$ & $2.2711$ & $%
3.5636$ & $1.7479$ \\ \hline
$\hat{\alpha}_{2}$ & $1.4707$ & $1.7095$ & $1.0192$ & $1.4551$ & $1.4551$ & $%
1.4552$ & $1.455$ \\ \hline
$\hat{\theta}_{2}$ & $3.9635$ & $3.381$ & $3.6432$ & $3.9849$ & $3.9849$ & $%
3.9985$ & $3.9722$ \\ \hline
$\hat{\beta}_{2}$ & $0.0297$ & $0.2316$ & $0.0122$ & $0.0290$ & $0.0290$ & $%
0.0292$ & $0.0287$ \\ \hline
\end{tabular}
\end{center}

Also, graphical representations that show the existence and uniqueness of $\beta_1$ and $\beta_2$ based on Equation \eqref{3.2.4} are shown in Figure 2 (a) and (b) respectively.

\begin{figure}
     \centering
     \begin{subfigure}[b]{0.3\textwidth}
         \centering
         \includegraphics[width=\textwidth]{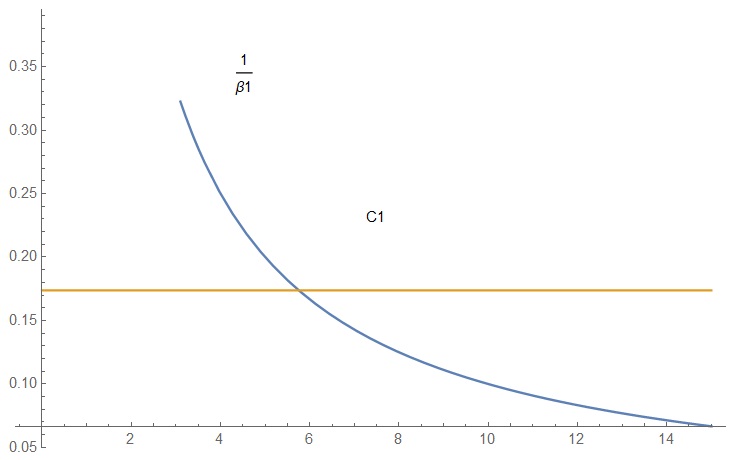}
         \caption{(a) First production line}
         \label{fig:Figure 2 (a)}
     \end{subfigure}
     \hspace{1cm}
     \begin{subfigure}[b]{0.3\textwidth}
         \centering
         \includegraphics[width=\textwidth]{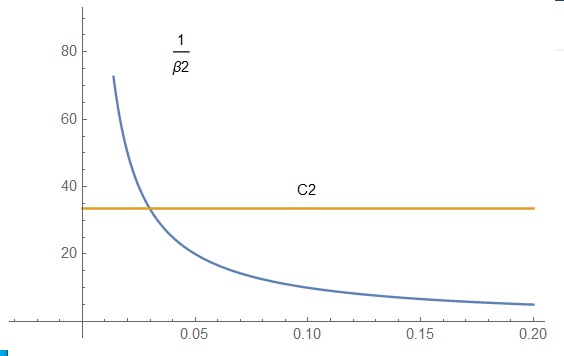}
         \caption{Second production line}
         \label{fig:Figure 2 (b)}
     \end{subfigure}
        \caption{Graphical illustration for the existence and uniqueness of the MLEs in the simulated data example.}
        \label{fig:Figure 2}
\end{figure}

In the MCMC approach, we run the chain $52000$ times and discard the first $2000$ values as `burn-in' while the prior knowledge parameters are chosen to be $\mu _{1}=$ $\mu _{2}=$ $q_{1}=q_{2}=p_{1}=p_{2}=0.01$ and $\lambda _{1}=$ $\lambda _{2}=w_{1}=w_{2}=v_{1}=v_{2}=2.$

\begin{center}
\begin{tabular}{ccccc}
\multicolumn{5}{c}{Table 6. 95\% confidence intervals for $\alpha
_{1},\theta _{1},\beta _{1};\alpha _{2},\theta _{2},\beta _{2}.\smallskip $}
\\ \hline
Method & $\alpha _{1}$ & Length & $\alpha _{2}$ & Length \\ \hline
ACI & $\left[ 1.689,11.041\right] $ & $9.35161$ & $\left[ 0.048,2.8935\right]
$ & $2.84556$ \\ \hline
Boot\ -p CI & $\left[ 5.1195,7.343\right] $ & $2.22353$ & $\left[
0.5345,5.097\right] $ & $4.56250$ \\ \hline
Boot\ -t CI & $\left[ 5.6166,6.276\right] $ & $0.65942$ & $\left[
-2.221,1.416\right] $ & $3.63759$ \\ \hline
Boot-BC CI & $\left[ 5.2265,7.386\right] $ & $2.15963$ & $\left[ 0.868,5.4262%
\right] $ & $4.55778$ \\ \hline
Boot-BCa CI & $\left[ 5.1065,7.182\right] $ & $2.07544$ & $\left[
0.7644,5.106\right] $ & $4.34129$ \\ \hline
CRI & $\left[ 6.3907,6.482\right] $ & $0.09119$ & $\left[ 1.4395,1.470\right]
$ & $0.03047$ \\ \hline
\end{tabular}

\begin{tabular}{ccccc}
\multicolumn{5}{c}{Table 6. continued$\ $} \\ \hline
Method & $\theta _{1}$ & Length & $\theta _{2}$ & Length \\ \hline
ACI & $\left[ 3.3657,17.9132\right] $ & $14.5475$ & $\left[ -5.708,13.635%
\right] $ & $19.3432$ \\ \hline
Boot\ -p CI & $\left[ 7.5863,15.3836\right] $ & $7.79726$ & $\left[
0.0014,5.8277\right] $ & $5.82632$ \\ \hline
Boot\ -t CI & $\left[ 8.7421,10.2884\right] $ & $1.5463$ & $\left[
3.2635,3.9634\right] $ & $0.69990$ \\ \hline
Boot-BC CI & $\left[ 7.1542,15.1540\right] $ & $7.99985$ & $\left[
0.0018,5.9530\right] $ & $5.95112$ \\ \hline
Boot-BCa CI & $\left[ 7.5863,15.3641\right] $ & $7.77776$ & $\left[
0.0015,5.8277\right] $ & $5.82622$ \\ \hline
CRI & $\left[ 10.4955,10.669\right] $ & $0.17349$ & $\left[ 3.7968,4.2318%
\right] $ & $0.43496$ \\ \hline
\end{tabular}

\begin{tabular}{ccccc}
\multicolumn{5}{c}{Table 6. continued$\ $} \\ \hline
Method & $\beta _{1}$ & Length & $\beta _{2}$ & Length \\ \hline
ACI & $\left[ -35.29,46.760\right] $ & $82.0509$ & $\left[ -0.061,0.121%
\right] $ & $0.18156$ \\ \hline
Boot\ -p CI & $\left[ 0.848,10.4994\right] $ & $9.65144$ & $\left[
0.0058,1.944\right] $ & $1.93823$ \\ \hline
Boot\ -t CI & $\left[ 0.3656,5.5876\right] $ & $5.22194$ & $\left[
-0.103,0.029\right] $ & $0.13181$ \\ \hline
Boot-BC CI & $\left[ 0.7926,9.7427\right] $ & $8.95014$ & $\left[
0.0091,2.455\right] $ & $2.44569$ \\ \hline
Boot-BCa CI & $\left[ 0.4894,9.1255\right] $ & $8.63611$ & $\left[
0.007,2.0050\right] $ & $1.99796$ \\ \hline
CRI & $\left[ 0.9100,4.2419\right] $ & $3.33191$ & $\left[ 0.006,0.0699%
\right] $ & $0.06389$ \\ \hline
\end{tabular}
\end{center}

After applying the approximation of Parsi and Bairamov \cite{35}, and to assess the effectiveness of that approximation, we get the expected values of the number of failures for the first production line which is $8.13025$, and the expected value of the number of failures for the second production line is 1.86975. While the exact number of failures after the test procedure is 7 for the first production line and 3 for the second production line.

\subsection{Real-data example}

In this subsection, the data of Xia et al. \cite{38} is used as an application of the JPROG-II-C model. This data represent the ordered breaking strengths of jute fiber at gauge lengths 10 mm and 20 mm. Jute Fiber has many uses in different sectors especially in the Engineering field such as the construction sector, automobile sector, textile sector, etc. The following references show the uses of Jute Fiber in different industries and its relation to the Engineering field: Shelar and Uttamchand \cite{39}, Zakaria et al. \cite{40}, Das et al. \cite{41} and Chakraborty et al. \cite{42}.

\begin{center}
\begin{tabular}{llllllllll}
\multicolumn{10}{c}{Table 7. Data Set 1} \\ \hline
$693.73$ & $704.66$ & $323.83$ & $778.17$ & $123.06$ & $637.66$ & $383.43$ & 
$151.48$ & $108.94$ & $50.16$ \\ \hline
$671.49$ & $183.16$ & $257.44$ & $727.23$ & $291.27$ & $101.15$ & $376.42$ & 
$163.40$ & $141.38$ & $700.74$ \\ \hline
$262.90$ & $353.24$ & $422.11$ & $43.93$ & $590.48$ & $212.13$ & $303.90$ & $%
506.60$ & $530.55$ & $177.25$ \\ \hline
\end{tabular}
\end{center}

The Kolmogorov Smirnov (K-S) distance between the empirical distribution of Data Set 1, shown in Table 7, and CDF of WGD is $0.105828$ with a $P$-value equals $0.855287$. Hence, the WGD fits well to Data Set 1.

\begin{center}
\begin{tabular}{llllllllll}
\multicolumn{10}{c}{Table 8. Data Set 2} \\ \hline
$71.46$ & $419.02$ & $284.64$ & $585.57$ & $456.60$ & $113.85$ & $187.85$ & $%
688.16$ & $662.66$ & $45.58$ \\ \hline
$578.62$ & $756.70$ & $594.29$ & $166.49$ & $99.72$ & $707.36$ & $765.14$ & $%
187.13$ & $145.96$ & $350.70$ \\ \hline
$547.44$ & $116.99$ & $375.81$ & $581.60$ & $119.86$ & $48.01$ & $200.16$ & $%
36.75$ & $244.53$ & $83.55$ \\ \hline
\end{tabular}
\end{center}

While the K-S distance between the empirical distribution of Data Set 2, shown in Table 8, and CDF of WGD is $0.149029$ with a $P$-value equals $0.473037$. Hence, the WGD also fits well to Data Set 2.

\begin{center}
\begin{tabular}{llllllllll}
\multicolumn{10}{c}{Table 9. JPROG-II-C data from the two data sets} \\ 
\hline
36.75 & 43.93 & 45.58 & 48.01 & 50.16 & 71.46 & 83.55 & 99.72 & 101.15 & 
108.94 \\ \hline
113.85 & 116.99 & 141.38 & 145.96 & 151.48 & 163.4 & 166.49 & 177.25 & 187.13
& 187.85 \\ \hline
\end{tabular}
\end{center}

Table 9 shows the JPROG-II-C data from the two Data Sets 1 and 2 with joint progressive censoring schemes $N=n_{1}+n_{2},~r=20$. $R=\left( 2,2,0,2,2,1,1,2,0,0,2,0,0,0,2,0,1,2,0,21\right)$ is chosen based on the choice of $s(1)$ and $s(2),$ where $s(1)=\left( 2,0,0,0,2,0,1,2,0,0,2,0,0,0,0,0,1,0,0,10\right) $ and $s(2)=\left( 0,2,0,2,0,1,0,0,0,0,0,0,0,0,2,0,0,2,0,11\right)$, such that $R_{i}=\sum_{h=1}^{2}s_{i}(h),~i=1,2,...,20$.

\begin{center}
\begin{tabular}{cccccccc}
\multicolumn{8}{c}{Table 10. Different point estimates for $\alpha
_{1},\theta _{1},\beta _{1};\alpha _{2},\theta _{2},\beta _{2}.$} \\ \hline
& $\left( .\right) _{ML}$ & $\left( .\right) _{Boot\;-p}$ & $\left( .\right)
_{Boot\;-t}$ & $\left( .\right) _{_{BS}}$ & \multicolumn{3}{c}{$\left(
.\right) _{BL}$} \\ \cline{6-8}
& \ \ \ \ \ \ \  & \ \ \ \ \ \ \ \ \ \ \ \  & \ \ \ \ \ \ \ \ \ \  & \  & $%
c=0.0001$ & $c=-2$ & $c=2$ \\ \hline
$\hat{\alpha}_{1}$ & $41.9214$ & $41.8926$ & $41.9102$ & $41.8855$ & $%
41.8855 $ & $41.8856$ & $41.8853$ \\ \hline
$\hat{\theta}_{1}$ & $16.3688$ & $16.3777$ & $16.4198$ & $16.4374$ & $%
16.4374 $ & $16.4407$ & $16.4343$ \\ \hline
$\hat{\beta}_{1}$ & $0.017$ & $0.0167$ & $0.0168$ & $0.017$ & $0.017$ & $%
0.017$ & $0.0169$ \\ \hline
$\hat{\alpha}_{2}$ & $55.6926$ & $55.6628$ & $55.6772$ & $55.6468$ & $%
55.6468 $ & $55.6485$ & $55.6451$ \\ \hline
$\hat{\theta}_{2}$ & $3.3669$ & $3.3663$ & $3.3658$ & $3.3655$ & $3.3655$ & $%
3.3655$ & $3.3655$ \\ \hline
$\hat{\beta}_{2}$ & $0.134$ & $0.1316$ & $0.1309$ & $0.1302$ & $0.1302$ & $%
0.1315$ & $0.129$ \\ \hline
\end{tabular}
\end{center}

The effectiveness of the proposed methods can be checked by comparing the results of these different methods. One can see that the results of Bayesian and maximum likelihood estimates are close together through the simulated and real examples. In addition, the authors showed the existence and uniqueness of the maximum likelihood estimates.

A graphical representations that show the existence and uniqueness of $\beta_1$ and $\beta_2$ based on Equation \eqref{3.2.4} are shown in Figure 3 (a) and (b) respectively.

\begin{figure}
     \centering
     \begin{subfigure}[b]{0.3\textwidth}
         \centering
         \includegraphics[width=\textwidth]{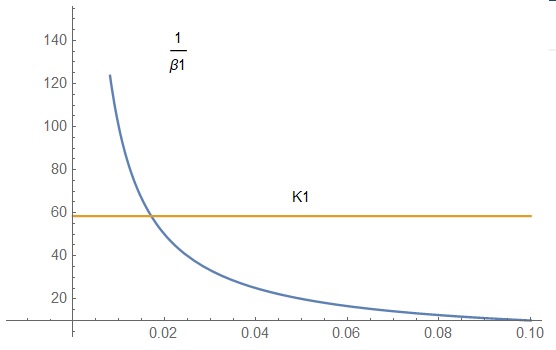}
         \caption{First production line}
         \label{fig:First production line}
     \end{subfigure}
     \hspace{1cm}
     \begin{subfigure}[b]{0.3\textwidth}
         \centering
         \includegraphics[width=\textwidth]{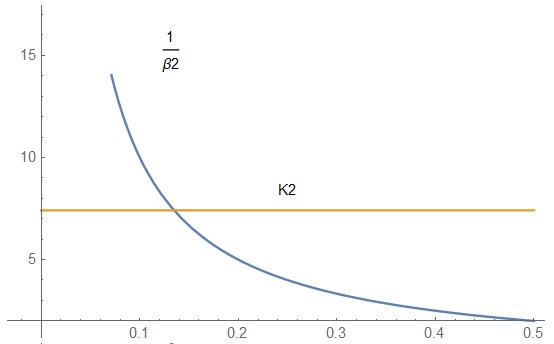}
         \caption{Second production line}
         \label{fig:Second production line}
     \end{subfigure}
        \caption{Graphical illustration for the existence and uniqueness of the MLEs in the real data example.}
        \label{fig:Figure 3}
\end{figure}

In the MCMC approach, we run the chain for $52000$ times and discard the first $2000$ values as `burn-in' while the prior knowledge parameters are chosen to be $\mu _{1}=$ $\mu _{2}=$ $q_{1}=q_{2}=p_{1}=p_{2}=0.01$ and $\lambda _{1}=$ $\lambda _{2}=w_{1}=w_{2}=v_{1}=v_{2}=2.$

\begin{center}
\begin{tabular}{ccccc}
\multicolumn{5}{c}{\textbf{TABLE 11} 95\% confidence intervals for $\alpha
_{1},\theta _{1},\beta _{1};\alpha _{2},\theta _{2},\beta _{2}.\smallskip $}
\\ \hline
\textbf{Method} & \textbf{$\alpha _{1}$} & \textbf{Length} & \textbf{$\alpha _{2}$} & \textbf{Length} \\ \hline
ACI & $\left[ 31.8799,51.9629\right] $ & $20.083$ & $\left[ 9.3128,102.073%
\right] $ & $92.7597$ \\ \hline
Boot\ -p CI & $\left[ 38.5958,46.8432\right] $ & $\allowbreak
8.\,\allowbreak 247\,4$ & $\left[ 44.3822,64.6178\right] $ & $20.2356$ \\ 
\hline
Boot\ -t CI & $\left[ 41.0657,41.5753\right] $ & $\allowbreak 0.509\,6$ & $%
\left[ 54.575,55.6084\right] $ & $1.0334$ \\ \hline
Boot-BC CI & $\left[ 38.8535,45.0456\right] $ & $\allowbreak 6.\,\allowbreak
192\,1$ & $\left[ 46.2642,61.6936\right] $ & $15.4294$ \\ \hline
Boot-BCa CI & $\left[ 39.1834,44.0334\right] $ & $\allowbreak
4.\,\allowbreak 85$ & $\left[ 48.3828,60.575\right] $ & $12.1922$ \\ \hline
CRI & $\left[ 41.867,41.9072\right] $ & $0.0401922$ & $\left[ 55.5918,55.7327%
\right] $ & $0.140813$ \\ \hline
\end{tabular}

\begin{tabular}{ccccc}
\multicolumn{5}{c}{Table 11. continued$\ $} \\ \hline
Method & $\theta _{1}$ & Length & $\theta _{2}$ & Length \\ \hline
ACI & $\left[ -22.0054,54.7431\right] $ & $76.7485$ & $\left[ 0.0693,6.6646%
\right] $ & $6.59529$ \\ \hline
Boot\ -p CI & $\left[ 3.10005,29.225\right] $ & $26.1249$ & $\left[
0.9597,5.3606\right] $ & $4.4009$ \\ \hline
Boot\ -t CI & $\left[ 15.4281,16.6201\right] $ & $1.192$ & $\left[
2.89935,3.79135\right] $ & $0.892$ \\ \hline
Boot-BC CI & $\left[ 6.0891,26.7269\right] $ & $20.6378$ & $\left[
1.60005,5.37795\right] $ & $3.7779$ \\ \hline
Boot-BCa CI & $\left[ 8.94385,23.2887\right] $ & $14.3448$ & $\left[
2.62805,4.22005\right] $ & $1.592$ \\ \hline
CRI & $\left[ 16.358,16.5554\right] $ & $0.197428$ & $\left[ 3.36,3.3718%
\right] $ & $0.0118217$ \\ \hline
\end{tabular}

\begin{tabular}{ccccc}
\multicolumn{5}{c}{Table 11. continued$\ $} \\ \hline
Method & $\beta _{1}$ & Length & $\beta _{2}$ & Length \\ \hline
ACI & $\left[ -0.0249,0.059\right] $ & $0.0838585$ & $\left[ -0.0848,0.3527%
\right] $ & $0.437472$ \\ \hline
Boot\ -p CI & $\left[ 0.0146,0.0658\right] $ & $0.0512$ & $\left[
0.0516,0.3503\right] $ & $0.2987$ \\ \hline
Boot\ -t CI & $\left[ 0.0018,0.024\right] $ & $0.0222$ & $\left[
0.2253,0.2438\right] $ & $0.0185$ \\ \hline
Boot-BC CI & $\left[ 0.0267,0.0345\right] $ & $0.0078$ & $\left[
0.2321,0.2634\right] $ & $0.0313$ \\ \hline
Boot-BCa CI & $\left[ 0.0209,0.0526\right] $ & $0.0317$ & $\left[
0.1145,0.1674\right] $ & $0.0529$ \\ \hline
CRI & $\left[ 0.0074,0.0298\right] $ & $0.02238$ & $\left[ 0.0694,0.2108%
\right] $ & $0.1414$ \\ \hline
\end{tabular}
\end{center}

The approximation of Parsi and Bairamov \cite{36} shows that the expected value of the number of failures for the first production line is $3.83187$ and the expected value of the number of failures for the second production line is $16.1681$. While the exact number of failures after the test procedure is $8$ for the first production line and 12 for the second production line.

\section{Conclusions}\label{sec7}

Using the JPROG-II-C samples strategy, the estimation of the $3k$ parameters of $k$ WG populations are performed based on Bayes and non-Bayes methods. Four types of bootstrap confidence intervals are used to obtain $95\%$ confidence intervals for the unknown parameters. The importance of the MCMC technique was noticeable in Bayesian estimation using the Metropolis-Hastings method. An illustrative example is presented to show how the MCMC and parametric bootstrap methods work. Also, the reliability of units that are manufactured by $k$ production lines is compared based on the invariance property of the MLEs of the parameters. The prior experimental knowledge of the number of failures for the first production line is computed approximately before the test procedure and compared with the exact number of failures after the test procedure. The proposed model can be applied to compare the reliability among different production lines through the failure times of the units produced by these production lines. These production lines may be located either at the same factory or in another place. The proposed model is recommended to be applied for evaluating the quality of the production lines and it can be a good reference for Engineering studies that are interested in the field of failure analysis. It also gives more reliable results whenever the size of selected samples for the study was relativity large. Since the proposed model depends on a parametric model, it is recommended to apply the it to real data sets that are characterized by high fitting criteria, measured basically through $p$-value and K-S distance for the WG distribution.

\nocite{*}
\bibliography{wileyNJD-AMA}%

\end{document}